\newif\ifEditMode
\providecommand{\setunion}[0]{\boldsymbol{\cup}}
\providecommand{\setint}[0]{\boldsymbol{\cap}}
\providecommand{\meet}[0]{\wedge}
\providecommand{\join}[0]{\vee}
\providecommand{\setArg}[2]{\left\{ {#1} \,\,\middle|\,\, {#2} \right\}}
\providecommand{\cmp}[0]{M}
\providecommand{\behaviorUnv}[0]{\mathcal{B}}
\providecommand{\cmpComposition}[0]{\parallel}
\providecommand{\cont}{\mathcal{C}}
\renewcommand{\paragraph}[1]{\noindent\textbf{#1}}
\definecolor{navy}{rgb}{0.0, 0.0, 0.5}
\newcommand{\Question}[1]{{\color{blue}#1}}
\providecommand{\behaviorset}[0]{\mathcal{B}}
\providecommand{\sat}[0]{\models}
\providecommand{\envsat}[0]{\models^E}
\providecommand{\impsat}[0]{\models^M}
\begin{document}

% Declarations for Front Matter

\title{Some Algebraic Aspects of \\ Assume-Guarantee Reasoning\thanks{This paper is based on Chapter 6 of \cite{Incer:EECS-2022-99}.}}

\author{Inigo Incer\inst{1}%\orcidID{0000-0001-7933-692X}%\orcidID{0000-1111-2222-3333}
\and
Albert Benveniste\inst{2}%\orcidID{1111-2222-3333-4444}
\and
Alberto Sangiovanni-Vincentelli\inst{3}%\orcidID{2222--3333-4444-5555}
}
\authorrunning{Incer, Benveniste, and Sangiovanni-Vincentelli}

\institute{California Institute of Technology, Pasadena, CA, USA
\and
INRIA/IRISA, Rennes, France
\and
University of California, Berkeley, CA, USA
}

\maketitle

\begin{abstract}
We present the algebra of assume-guarantee (AG) contracts. We define contracts, provide new as well as known operations, and show how these operations are related. Contracts are functorial: any Boolean algebra has an associated contract algebra. We study monoid and semiring structures in contract algebra---and the mappings between such structures. We discuss the actions of a Boolean algebra on its contract algebra.
\end{abstract}

\providecommand{\id}[0]{e}
\newcommand{\algSummPath}[0]{sections/}
\newcommand{\documentPath}{\algSummPath}

\newcommand{\calg}[0]{\mathbf{C}}
\newcommand{\andmon}{\calg_\land^M}
\newcommand{\ormon}{\calg_\lor^M}
\newcommand{\parmon}{\calg_\parallel^M}
\newcommand{\mermon}{\calg_\bullet^M}

\newcommand{\andsr}{\calg_\land^S}
\newcommand{\orsr}{\calg_\lor^S}
\newcommand{\parsr}{\calg_\parallel^S}
\newcommand{\mersr}{\calg_\bullet^S}

\newcommand{\balg}[0]{B}

\newcommand{\boolandmon}[0]{\mathbf{M}_{\land}}
\newcommand{\boolormon}[0]{\mathbf{M}_{\lor}}
\newcommand{\boolandsr}[0]{\mathbf{S}_{\land}}
\newcommand{\boolorsr}[0]{\mathbf{S}_{\lor}}

%\begin{flushright}
%\rightskip=1.8cm\textit{``To write a specification, there must be an object to be specified and a well-defined interface between the object and its environment.''} \\
%\vspace{.2em}
%\rightskip=.8cm---\textup{L. Lamport}~\cite{LamportWhatItMeans}
%\end{flushright}

\section{Introduction}

The design of complex cyber-physical systems (CPS) involves fundamental challenges in modeling, specification, and integration.
Among the modeling challenges, the need to model the interconnection of components modeled using discrete transitions with those using differential equations is fundamental. It is the opposition between the discrete and the continuous, which Ren{\' e} Thom calls ``the fundamental aporia of mathematics.'' Specification is well known to be hard in system engineering: practitioners consistently rank the generation of specifications for a project among the top challenges in system design \cite{national2016top,sage1998systems}. Finally, integration has seen steep increases in magnitude in the 20th century, bringing technical challenges to engineering (how do we design, build, and maintain such systems?)\footnote{We read, for example:
``We are increasingly experiencing a new type of accident that
arises in the interactions among components (electromechanical, digital, and human)
rather than in the failure of individual components'' \cite{LEVESON2004237}. 
``Almost all SI [system integration] failures occur at interfaces primarily due to incomplete, inconsistent, or misunderstood specifications'' \cite{https://doi.org/10.1002/sys.21249}.
%``System specification and integration is particularly critical for Original
%Equipment Manufacturers (OEM) managing the integration and maintenance
%process with subsystems that come from different suppliers who use different
%design methods, different software architectures, and different hardware platforms\textellipsis even %inside an OEM itself, complex systems involve
%a number of different aspects or viewpoints that are generally handled by different teams using different paradigms and tools'' \cite{BenvenisteContractBook}.
%``It is no longer feasible for an individual to fully comprehend and tightly control all the details of design of a complex, electromechanical system\textellipsis This recognition has led to two trends with unfortunate side effects: (a) an increase in specialization among system engineers\textellipsis and (b) an increasing use of integrated product teams (IPTs) to bring these sub-specialties together\textellipsis The increasing specialization has resulted in smaller engineering units and introduced barriers to communication among and across design organizations. Concomitantly, the complexity of modern designs has produced a loss of emphasis on core engineering fundamentals. This effect is evident in the poor translation of user requirements into achievable system specifications; inadequate design for manufacturing and test; and fundamental weakness in addressing design factors that drive system reliability and availability'' \cite{https://doi.org/10.1002/sys.21234}.
}
%%%% business
and organizational challenges to the business landscape (how does an organization change to support the development, construction, and maintenance of such systems?)\footnote{Hobday et al.~\cite{hobday2005systems} argue that ``systems integration has evolved beyond its original technical and operational tasks to encompass a strategic business dimension becoming, therefore, a core capability of many high-technology corporations\textellipsis The more complex, high-technology, and high cost the product, the more significant systems integration becomes to the productive activity of the firm\textellipsis Systems integration capabilities are inextricably linked to decisions on whether to make in-house, outsource, or collaborate in production and competition.''
%The authors conclude that ``in high-volume products\textellipsis firms use their capabilities to achieve competitive advantage by exploiting upstream component supplier relations in ways which differ according to the particular phase of each product life cycle. By contrast, in low-volume, high-cost capital goods, manufacturing firms are focusing more on exploiting downstream relationships with system users by integrating services such as maintenance, finance, consultancy, and operations within their product offerings. In both cases, systems integration capability enables firms to move selectively, and simultaneously, up- and downstream to gain advantages in the marketplace.''
Davies et al.~\cite{davies2007organizing} claim that
``
%the ability to integrate a range of components
%from a variety of internal and external suppliers is becoming
%the core activity required to provide integrated solutions.
the traditional advantages of the vertically-integrated systems
seller offering single-vendor designed systems is no longer a
major source of competitive advantage in many industries.''}.
The organizational challenges of system engineering are so salient that some authors categorize the field as a branch of management \cite{https://doi.org/10.1002/sys.20028,https://doi.org/10.1002/inst.201013143}.
%%%%%

Werner Damm has championed two key concepts to address system design challenges: rich components and contracts.
In \cite{DammRichComponents}, he observes that ``the design of components in complex systems
inherently involves multi-site, multi-domain and cross-organizational design teams'' and that
%\begin{quote}
``in spite of well-defined and enforced process models, such as the V-model, a multitude of `disturbances' may lead to
undesirable design iterations. `Disturbances' in this
context take a variety of forms, such as late or
incomplete sets of requirements, late requirement
changes, unspecified assumptions, unexpected
disruption in a supply chain sub-process, the failure to
meet non-functional constraints such as
communication latencies, implicit interdependencies,
etc.''
%\end{quote}
One way to deal with these disturbances is by characterizing the domains of validity of our models.
In \cite{DammRichComponents}, a rich component is proposed as a classical component upgraded by
\begin{quote}
    (1) extending component specifications to cover
    all viewpoints necessary for electronic
    system design;
    (2) explicating the dependency of such
    specifications on assumptions on the context
    of a component;
    (3) providing classifiers to such assumptions,
    relating both the positioning in a layered
    design space (horizontal, up, down), as well
    as to their confidence level.
\end{quote}

The idea of representing components using assume-guarantee (AG) contracts launched a major research effort in system engineering \cite{multViewpoint,sangiovanni2012taming,iannopollo2014library,7268792,BenvenisteContractBook}. We wanted to streamline two key aspects of complex system design: relationships between a system integrator and its suppliers (cross-organizational design) and the interactions between multiple engineering organizations within the same company (cross-domain design).
In contract-based design, to each component in our system we assign an assume-guarantee specification---or contract.
A rich contract algebra \cite{BenvenisteContractBook,Incer:EECS-2022-99} allows us to relate global system properties to the local properties of the components it comprises. This algebra provides mathematical support for key aspects of the system design process. In this paper, we present this algebra and discuss its role in system design.

\medskip

\paragraph{Contracts.}
AG contracts can be understood as formal specifications split in two parts: (i) assumptions made on the environment, and (ii) responsibilities assigned to the object satisfying the specification when it is instantiated in an environment which meets the assumptions of the contract. 
Contracts were introduced to streamline the integration of complex systems and to support concurrent design. System integration pertains to the composition of multiple design objects into a coherent whole. For example, suppose a company wishes to implement a system with a specification $\cont$; designers may realize that there are two sub-specifications $\cont_1$ and $\cont_2$ such that the composition of their implementations always yields an implementation for the top level specification. In the language of AG contracts, we would say that the composition of $\cont_1$ and $\cont_2$, written $\cont_1 \parallel \cont_2$, \emph{refines} $\cont$. This company may now develop an implementation $\cmp_1$ for $\cont_1$, and assign $\cont_2$ to a third-party OEM to deliver an implementation $\cmp_2$.
If $\cmp_2$ is an implementation for $\cont_2$, the original company knows that $\cmp_1$ and $\cmp_2$ can be composed and that this composition meets the top-level specification $\cont$.
In this setting, frictions in the supply chain are alleviated as companies exchange formal specifications expressed as contracts. 

An additional use of contracts is as follows. Suppose our company wants to implement a system with a specification $\cont$  using a component $\cmp_1$ with specification $\cont_1$ that is not sufficient to implement $\cont$. Contracts provide an operation called \textit{quotient} which yields the specification whose implementation is \emph{exactly} the component $\cmp'$ such that $\cmp_1$ composed with $\cmp'$ meets the specification $\cont$. The operation of quotient has uses in synthesis (when we have made incremental progress towards meeting a goal) and in every situation where we need to find \textit{missing components}.

To say that contracts support concurrent design refers to another aspect of the design process. The design of some components involves multiple engineers working on different aspects of the same object. For example, a team may work on the functionality aspects of an integrated circuit, while another works on its timing characterization. If the functionality team generates a specification $\cont_f$, and the timing team generates a specification $\cont_t$, the two teams can combine their specs into a single contract object $\cont$ through an operation called \emph{merging}. In contract theory, these various aspects of a component are called \emph{viewpoints}.

The work on the assume-guarantee reasoning of Floyd-Hoare logic~\cite{Hoare:1969,floyd1967rw}, on assume-guarantee specifications of Lamport and Abadi~\cite{Lamport:1989:SAS:63238.63240,AbadiLamportComposingSpecs}, on design by contract by Meyer~\cite{meyerContract}, on interface automata by de Alfaro and Henzinger \cite{alfaroHenzingerIntAutomata},
and applications of formal specifications to cyber-physical systems by Damm~\cite{DammRichComponents},
yielded that formal assume-guarantee specifications could be used to design and analyze any cyber-physical system, including their discrete and continuous aspects. Assume-guarantee contracts were thus introduced for this purpose by Benveniste et al. in~\cite{multViewpoint}. Cyber-physical-system design methodologies using contracts were described in~\cite{SANGIOVANNIVINCENTELLI2012217,7268792}.

Two questions of practical relevance (the related discussions are written in {\color{blue}blue}) inform our discussion below.
\emph{\Question{Question~1: how can multiple specifications be combined to generate a system specification?}}
This leads us to consider binary operations on assume-guarantee contracts and their uses in the system design process.
The question has a corollary: \emph{\Question{when a system is decomposed across multiple suppliers, as well as across multiple viewpoints, is there a right order for applying the contract operations? Does the result depend on this order?}}
This question will lead us to consider how the various operations interact among themselves.

The second question has to do with computational complexity. We know that abstracting specifications tends to yield computationally-friendlier semidecision procedures.
\emph{\Question{Question~2: how can we compute contract abstractions?}} Part of our answer to this question will make use of semiring actions.

The structure of the paper is as follows.
Section~\ref{sc:agcont} begins to address Question~1 by covering the standard definitions of contracts~\cite{BenvenisteContractBook} and all known contract operations. The content of this section is a review, except for the operations of \textit{implication} and \textit{coimplication}, which, to the best of our knowledge, have not been published before. Section~\ref{sc:algstr} treats contracts as an algebra associated with any Boolean algebra and presents a study of monoid and semiring structures within a contract algebra. In other words, this section deals with how the various contract operations interact with each other, yielding insight into compositional design methodologies using contracts.
We use these results to define contract actions, which play a role in answering Question~2.
Section~\ref{sc:actions} covers two ways in which a Boolean algebra can act on its contract algebra, and Section~\ref{sc:contabs} discusses contract abstractions.
Sections~\ref{sc:algstr}-\ref{sc:contabs} summarize the contributions of this paper.

\section{Assume-guarantee contracts}
\label{sc:agcont}

This section defines assume guarantee contracts and addresses \Question{Question~1: how can multiple specifications be combined to generate a system specification?}

We discuss four binary operations that allow us to combine contracts. We also explore adjoint operations that allow us to carry out contract decompositions optimally. The content in this section borrows from \cite{BenvenisteContractBook,Negulescu95processspaces}, and the references in the text. Its contributions are the closed-form expressions for implication and coimplication and the use of duality to unify the presentation of the contract operations.

Let $\behaviorUnv$ be a set called the \emph{universe of behaviors}. Its elements are called \emph{behaviors}.  The universe of behaviors fixes the modeling formalism we have chosen in our application.
A property is defined as a subset of $\behaviorUnv$. A component is also a subset of $\behaviorUnv$. The difference is semantics: we think of components as the set of behaviors they can display. Properties contain behaviors meeting a certain criterion. We say that a component $M$ satisfies a property $P$, written $M \models P$ if $M \subseteq P$.
Assume-guarantee contracts are pairs of properties.

\begin{definition}
A contract $\cont$ is a pair of properties $\cont = (A, G)$. We call $A$ \emph{assumptions}, and $G$ \emph{guarantees}.
\end{definition}

Components can have two types of relationships with respect to a contract. 

\begin{definition}
Let $\cont = (A,G)$ be a contract. We say that a component $E$ is an environment for $\cont$, written $E \envsat \cont$, if $E \sat A$.
\end{definition}

Environments are those components which meet the assumptions of a contract. Implementations are those which meet the guarantees of the contract when operating in an environment accepted by the contract.

\begin{definition}
Let $\cont = (A,G)$ be a contract. We say that a component $\cmp$ is an implementation for $\cont$, written $\cmp \impsat \cont$, if $\cmp \cmpComposition E \sat G$ for every environment $E$ of $\cont$.
\end{definition}

Now that we have definitions for environments and implementations, we define a relation on contracts that declares two contracts equivalent when they have the same environments and the same implementations:

\begin{definition}
Let $\cont$ and $\cont'$ be two contracts. We say they are equivalent when they have the same environments and the same implementations.
%\begin{align*}&E \envsat \cont \text{ if and only if } E \envsat \cont' \quad\text{and} \quad M \impsat \cont \text{ if and only if } M \impsat \cont'.\end{align*}
\end{definition}

This means that for $\cont = (A, G)$ and $\cont' = (A', G')$ to be equivalent, we must have $A = A'$ and $G \setint A = G' \setint A' = G' \setint A$ (because $A' = A$). The largest $G'$ meeting this condition is $G' = G \setunion \neg A$ (the complement is taken with respect to $\behaviorset$). Enforcing this constraint for a contract allows us to have a unique mathematical object for each set of environments and implementations. We thus define an AG contract in canonical form as follows:
\begin{definition}
A contract in canonical form is a contract $\cont = (A, G)$ satisfying $A \setunion G = \behaviorset$.
\end{definition}

\textit{From now on, we assume all contracts are in canonical form.}

\subsection{Duality}

There is a unary operation which is helpful in revealing structure for AG contracts.

\begin{definition}
Let $\cont = (A, G)$ be a contract. We define a unary operation called reciprocal as follows: $C^{-1} = (G, A)$.
\end{definition}
This operation flips environments and implementations, i.e., it gives us the ``environment view'' of the specification $\cont$.
Note that the reciprocal respects canonicity.%: $A \setunion \neg G = \neg (\neg A \setint G) = \neg \neg A = A$.

\begin{definition}
Let $\circ$ and $\star$ be two binary operations on AG contracts, we say that the operations are dual when
$
    \left(\cont_a \circ \cont_b\right)^{-1} = \cont_a^{-1} \star \cont_b^{-1}
$.
\end{definition}

\subsection{Order}

\begin{definition}
Suppose $\cont$ and $\cont'$ are two contracts. We say that $\cont$ is a refinement of $\cont'$, written $\cont \le \cont'$, when all implementations of $\cont$ are implementations of $\cont'$ and all environments of $\cont'$ are environments of $\cont$.
%, i.e.,
%\begin{align*}
%    M \impsat \cont \Rightarrow M \impsat \cont' \quad \text{and} \quad 
%    E \envsat \cont' \Rightarrow E \envsat \cont.
%\end{align*}
\end{definition}

The association we make of a specification being a refinement is that it is harder to meet than another. This is why we say that a specification accepting more environments is a refinement of one accepting less. We can express this order relation using assumptions and guarantees.

\begin{proposition}[Theorem 5.2 of \cite{BenvenisteContractBook}]
Let $\cont = (A, G)$ and $\cont' = (A', G')$ be two contracts. Then $\cont \le \cont'$ when
$G \subseteq G' \text{ and } A' \subseteq A.$
\end{proposition}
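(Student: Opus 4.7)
The plan is to unpack the definition of $\le$ and then use canonicity to reduce both membership conditions to simple set inclusions. Refinement requires two inclusions: (i) every environment of $\cont'$ is an environment of $\cont$, which by the definition $E \envsat \cont$ iff $E \subseteq A$ is directly equivalent to $A' \subseteq A$; and (ii) every implementation of $\cont$ is an implementation of $\cont'$. For (ii), the only subtle step is that the definition of implementation quantifies over all environments via a composition operation, so I would first reduce this quantifier to a single inclusion.

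For that reduction, I would argue that under canonicity, $\cmp \impsat (A,G)$ is equivalent to $\cmp \subseteq G$. The point is that the ``worst'' environment one needs to test against is the full assumption set $A$ itself (taken as a component), and under the standard behavioral-intersection semantics of composition, $\cmp \parallel A \sat G$ becomes $\cmp \setint A \subseteq G$, i.e., $\cmp \subseteq G \setunion \neg A$. Canonicity $A \setunion G = \behaviorset$ gives $\neg A \subseteq G$, so the right-hand side collapses to $G$. Once this characterization is in hand, (ii) is just the implication $\cmp \subseteq G \Rightarrow \cmp \subseteq G'$, which follows from $G \subseteq G'$.

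For the converse direction (which I would include in case the ``when'' in the statement is read as ``iff''), I would exhibit the tightest witnesses. Taking $\cmp = G$ as a component, canonicity ensures $G$ is itself an implementation of $\cont$, so by the refinement hypothesis $G$ must be an implementation of $\cont'$, yielding $G \subseteq G'$. Similarly, taking the component $E = A'$, it is trivially an environment of $\cont'$, hence of $\cont$ by refinement, which gives $A' \subseteq A$.

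The main obstacle, such as it is, is purely bookkeeping: being explicit about how the quantification ``for every environment $E$'' in the implementation definition collapses to a single check against the maximal environment under canonical form. Once that is isolated, the proposition becomes an immediate consequence of monotonicity of set inclusion, and no further properties of the composition operator are needed.
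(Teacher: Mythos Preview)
The paper does not supply its own proof of this proposition; it merely cites it as Theorem~5.2 of \cite{BenvenisteContractBook} and moves on. So there is nothing to compare against.

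Your argument is correct and is the standard one. Under canonical form, the key reduction you isolate---that $\cmp \impsat (A,G)$ is equivalent to $\cmp \subseteq G$---is exactly right: testing against the maximal environment $E=A$ gives $\cmp \setint A \subseteq G$, hence $\cmp \subseteq G \setunion \neg A = G$, and the converse is immediate since $\cmp \setint E \subseteq \cmp$. With that in hand both directions of the equivalence are one-liners, and your choice of witnesses $\cmp = G$ and $E = A'$ for the converse is the natural one. The only tacit assumption you make is that component composition $\cmpComposition$ is behavioral intersection; the paper never defines this explicitly, but it is the intended semantics (it underlies, for example, the derivation of the closed form for contract composition), so the step is justified.
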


\subsection{Conjunction and disjunction}

The notion of order provides a lattice structure to AG contracts in canonical form. Given contracts $\cont = (A, G)$ and $\cont' = (A', G')$, their meet (GLB) and join (LUB) are given by
\begin{align*}
    &\cont \meet \cont' = (A \setunion A', G \setint G') \quad \text{and} \quad 
    \cont \join \cont' = (A \setint A', G \setunion G').
\end{align*}

We leave it to the reader to verify that conjunction and disjunction are monotonic with respect to the refinement order. Also, conjunction and disjunction furnish our first example of dual operations:
$
  \cont \land \cont' = (G \setint G', A \setunion A' )^{-1}  = \left( (G, A) \lor (G', A') \right)^{-1} =
  \left(\cont^{-1} \lor \cont'^{-1}\right)^{-1}
$.

%Conjunction retains all information of the contracts being conjoined. It is used to combine into one contract multiple aspects of a design element. Disjunction has an application in product lines.

If we interpret a contract as the entailment $A\Rightarrow{G}$, then contract conjunction is the conjunction of such entailments. \Question{Conjunction can be used to combine viewpoints. Disjunction has an application in product lines.}

\subsection{Composition}

\Question{The notion of composition of AG contracts yields the specification of systems obtained from composing implementations of each of the contracts being composed. Contract composition formalizes how contracts with suppliers result in a system level contract.} This operation is defined by axiom as follows:

Suppose $\cont_1$ and $\cont_2$ are two specifications to be composed. Call $\cont$ the composite specification. Let $\cmp_1$ and $\cmp_2$ be arbitrary implementations of $\cont_1$ and $\cont_2$, respectively, and let $E$ be any environment of $\cont$. We define $\cont$ to be the smallest contract satisfying the following constraints:
%\begin{itemize}
the composite $\cmp_1 \cmpComposition \cmp_2$ is an implementation of $\cont$;
the composite $\cmp_1 \cmpComposition E$ is an environment of $\cont_2$; and
the composite $\cmp_2 \cmpComposition E$ is an environment of $\cont_1$.
%\end{itemize}

The first requirement states that composing implementations of the specs being composed yields an implementation of the composite specification. The second requirement states that instantiating an implementation of $\cont_1$ in an environment of the composite specification yields an environment for $\cont_2$. And the last requirement is the analogous statement for $\cont_1$. This principle, which states how to compose specifications split between environment and implementation requirements, was stated for the first time by M. Abadi and L. Lamport~\cite{AbadiLamportComposingSpecs}.
We can obtain a closed-form expression of this principle for AG contracts:

\begin{proposition}[Theorem 5.2 of~\cite{BenvenisteContractBook}]
Let $\cont_1 = (A_1, G_1)$ and $\cont_2 = (A_2, G_2)$ be two AG contracts. Their composition, denoted $\cont_1 \parallel \cont_2$, is given by
$\cont_1 \parallel \cont_2 = \left( A_1 \setint A_2 \setunion \neg(G_1 \setint G_2) , G_1 \setint G_2\right)$.
\end{proposition}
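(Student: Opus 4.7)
The plan is to prove the equality in two stages: first, verify that the right-hand side $\cont = (A,G)$ with $A = A_1 \setint A_2 \setunion \neg(G_1 \setint G_2)$ and $G = G_1 \setint G_2$ is a well-defined canonical contract meeting the three axiomatic requirements; second, show it is the smallest such contract under the refinement order.

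Canonicity of $\cont$ is immediate from $A \setunion G = (A_1 \setint A_2) \setunion \neg(G_1 \setint G_2) \setunion (G_1 \setint G_2) = \behaviorset$. Before checking the axioms I would record a useful simplification: for any canonical contract $(A', G')$, a component $M$ is an implementation exactly when $M \subseteq G'$. Indeed $M \impsat (A',G')$ unfolds to $M \setint E \subseteq G'$ for every $E \subseteq A'$, which reduces to $M \setint A' \subseteq G'$; combined with $\neg A' \subseteq G'$ from canonicity, this gives $M \subseteq G'$. Equipped with this, requirement~1 is trivial: $M_1 \subseteq G_1$ and $M_2 \subseteq G_2$ force $M_1 \parallel M_2 \subseteq G_1 \setint G_2 = G$. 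For requirement~2 I would take $E \subseteq A$ and $M_1 \subseteq G_1$ and verify $M_1 \setint E \subseteq A_2$; expanding $G_1 \setint A$ by distributivity yields $(G_1 \setint A_1 \setint A_2) \setunion (G_1 \setint \neg G_2)$, and canonicity of $\cont_2$ gives $\neg G_2 \subseteq A_2$, so both terms lie in $A_2$. Requirement~3 is symmetric.

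For minimality, suppose $\cont'' = (A'', G'')$ is another canonical contract satisfying the three constraints. To see $G \subseteq G''$ it suffices to instantiate requirement~1 with the ``maximal'' implementations $M_1 = G_1$ and $M_2 = G_2$, giving $G_1 \setint G_2 \subseteq G''$. To see $A'' \subseteq A$, take the maximal environment $E = A''$ and apply requirements~2 and 3 with $M_i = G_i$ to obtain $G_1 \setint A'' \subseteq A_2$ and $G_2 \setint A'' \subseteq A_1$. A case split on a behavior $b \in A''$ finishes the argument: if $b \in G_1 \setint G_2$ then $b \in A_1 \setint A_2$, so $b \in A$; otherwise $b \in \neg(G_1 \setint G_2) \subseteq A$.

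The delicate step is spotting the right witnesses for minimality: one exploits that $G_i$ is itself an implementation of $\cont_i$ and that $A''$ is itself an environment of $\cont''$, so instantiating the three axioms at these ``maximal'' choices sharpens them into exactly the two inequalities $G \subseteq G''$ and $A'' \subseteq A$ needed for $\cont \le \cont''$. The rest is routine Boolean manipulation that uses canonicity of all three contracts on both sides.
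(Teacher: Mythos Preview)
Your proof is correct and follows essentially the same approach as the paper's (commented-out) argument: both reduce the three axioms to the explicit bounds $G_1\setint G_2 \subseteq G''$ and $A'' \subseteq (A_1\setint A_2)\setunion\neg(G_1\setint G_2)$ by instantiating the constraints at the maximal implementations $M_i=G_i$ and the maximal environment $E=A''$, and both lean on canonicity ($\neg G_i\subseteq A_i$) for the Boolean simplifications. The only difference is presentational: the paper derives the candidate by simplifying the GLB expression and then observes the bounds are achievable, whereas you posit the candidate, verify the axioms directly, and then prove minimality---two orderings of the same argument.
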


We state without proof an important property of composition:

\begin{proposition}\label{{\documentPath}gbq78xrb6}
Composition of AG contracts is monotonic with respect to the refinement order.
\end{proposition}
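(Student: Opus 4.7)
The plan is to reduce the statement to the coordinate-wise characterization of refinement and then verify two set inclusions using monotonicity of intersection and antitonicity of complement. Fix contracts $\cont_i = (A_i, G_i)$ and $\cont_i' = (A_i', G_i')$ for $i = 1, 2$, and assume $\cont_i \le \cont_i'$. By the refinement characterization quoted above, this is equivalent to $G_i \subseteq G_i'$ and $A_i' \subseteq A_i$ for $i = 1, 2$. I want to conclude that $\cont_1 \parallel \cont_2 \le \cont_1' \parallel \cont_2'$, for which I will again use the refinement characterization applied to the closed-form expression for composition.

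First I would handle the guarantee component. By the composition formula, the guarantees of $\cont_1 \parallel \cont_2$ are $G_1 \setint G_2$ and the guarantees of $\cont_1' \parallel \cont_2'$ are $G_1' \setint G_2'$. Monotonicity of intersection and $G_i \subseteq G_i'$ give $G_1 \setint G_2 \subseteq G_1' \setint G_2'$, which is one of the two conditions needed.

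Next I would handle the assumption component, where I need $A_1' \setint A_2' \setunion \neg(G_1' \setint G_2') \subseteq A_1 \setint A_2 \setunion \neg(G_1 \setint G_2)$. I treat each summand separately: $A_1' \setint A_2' \subseteq A_1 \setint A_2$ by monotonicity of intersection together with $A_i' \subseteq A_i$; and since $G_1 \setint G_2 \subseteq G_1' \setint G_2'$, antitonicity of complement yields $\neg(G_1' \setint G_2') \subseteq \neg(G_1 \setint G_2)$. Taking the union of these two inclusions gives the desired containment of assumptions.

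Combining both inclusions and appealing once more to the refinement characterization (which applies because composition of contracts in canonical form is again in canonical form, a one-line check using $(A_1 \setint A_2 \setunion \neg(G_1 \setint G_2)) \setunion (G_1 \setint G_2) = \behaviorset$) yields $\cont_1 \parallel \cont_2 \le \cont_1' \parallel \cont_2'$. There is no real obstacle: the argument is pure bookkeeping once the closed form is in hand. The only point worth flagging is the asymmetric roles of $A$ and $G$ in refinement, which is precisely what makes the $\neg$ term in the composition formula cooperate correctly with the hypothesis on guarantees.
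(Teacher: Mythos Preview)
Your proof is correct. The paper actually states this proposition \emph{without proof}, so there is nothing to compare against; your direct verification via the closed-form composition formula and the coordinate-wise refinement characterization is exactly the straightforward argument one would expect, and every step checks out.
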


%This expression follows from the composition principle just stated. See theorem 5.2 of~\cite{BenvenisteContractBook}.

\subsection{Strong merging (or merging)}

We said that AG contracts are used 
%not only 
to handle the specifications of 
%multiple components comprising a system, but also 
the various viewpoints of the same design element. Suppose $\cont_1$ and $\cont_2$ are specifications corresponding to different aspects to the same design object, e.g., functionality and power. We define their merger, denoted $\cont_1 \bullet \cont_2$, to be the contract which guarantees the guarantees of both specifications when the assumptions of both specifications are respected:
%, that is,
$
    \cont_1 \bullet \cont_2 = \left( A_1 \setint A_2, G_1 \setint G_2 \setunion \neg(A_1 \setint A_2) \right).
$
%Observe that t
This contract is equivalent to  contract $\left( A_1 \setint A_2, G_1 \setint G_2\right)$, which is exactly what we defined merging to be. \Question{Merging can be used to combine viewpoints.}
Merging and composition are duals, as pointed out in \cite{Negulescu95processspaces}.

\subsection{Adjoints}

We have introduced four operations on AG contracts: two were obtained from the partial order, and two by axiom. Now we obtain the adjoints of these operations. Adjoints are used to compute optimal decompositions of contracts.

\subsubsection{Quotient (or residual)}
The adjoint of composition is called \emph{quotient}.
Let $\cont$ and $\cont'$ be two AG contracts. The quotient (also called residual in the literature), denoted $\cont / \cont'$, is defined as the largest AG contract $\cont''$ satisfying
$ \cont' \parallel \cont'' \le \cont$.

Due to the fact that the quotient is the largest contract with this property, Proposition~\ref{{\documentPath}gbq78xrb6} tells us that any of its refinements has this property. 

If we interpret $\cont$ as a top-level specification that our system has to meet (e.g., the specification of a vehicle), and $\cont'$ as the specification of a subset of the design for which we already have an implementation (e.g., a powertrain), then the quotient is the specification whose implementations are exactly those components that, if added to our partial design, would yield a system meeting the top-level specification. The following proposition gives us a closed-form expression for the quotient of AG contracts:

\begin{proposition}[Theorem 3.5 of~\cite{agquotient}]
Let $\cont = (A, G)$ and $\cont' = (A', G')$ be two AG contracts. The quotient, denoted $\cont / \cont'$, is given by
$$\cont / \cont' = \left( A \setint G', G \setint A' \setunion \neg (A \setint G')\right).$$
\end{proposition}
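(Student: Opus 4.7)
The plan is to unpack the defining universal property of the quotient and verify that the proposed pair realizes it. Write $\cont'' = (A'', G'')$ for an arbitrary candidate quotient. Using the closed-form composition formula (Theorem 5.2 of \cite{BenvenisteContractBook}), the refinement $\cont' \parallel \cont'' \le \cont$ translates, via the order characterization $G \subseteq G'$, $A' \subseteq A$, into the two conditions
\begin{equation*}
G' \setint G'' \subseteq G \quad \text{and} \quad A \setint G' \setint G'' \subseteq A' \setint A''.
\end{equation*}
The second comes from rewriting $A \subseteq A' \setint A'' \setunion \neg(G' \setint G'')$. This reduces the problem to characterizing the largest pair $(A'', G'')$ (in the refinement order: largest $A''$, smallest $G''$, subject to canonicity) meeting these two set-theoretic inclusions.

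First I would verify that the candidate $\cont'' = (A \setint G',\, G \setint A' \setunion \neg(A \setint G'))$ is in canonical form (which is immediate) and that it satisfies both inclusions. The key computations lean on the canonicity of $\cont$ and $\cont'$, namely $\neg A \subseteq G$ and $\neg A' \subseteq G'$; for instance, $G' \setint G'' = G' \setint (G \setint A' \setunion \neg A)$ after the $\neg G'$ summand is absorbed, and each piece is contained in $G$ using $\neg A \subseteq G$. The second inclusion drops out even more cleanly: $A \setint G' \setint G''$ annihilates the $\neg(A \setint G')$ summand of $G''$, leaving $A \setint G' \setint G \setint A' \subseteq A' \setint A''$.

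Next I would show maximality. Let $\tilde{\cont} = (\tilde A, \tilde G)$ be any contract satisfying $\cont' \parallel \tilde\cont \le \cont$, so $G' \setint \tilde G \subseteq G$ and $A \setint G' \setint \tilde G \subseteq \tilde A$. For the assumptions I need $A \setint G' \subseteq \tilde A$: split $A \setint G' = (A \setint G' \setint \tilde G) \setunion (A \setint G' \setint \neg \tilde G)$; the first summand is in $\tilde A$ by the derived inclusion, and the second is in $\neg \tilde G \subseteq \tilde A$ by canonicity of $\tilde\cont$. For the guarantees I need $\tilde G \subseteq G \setint A' \setunion \neg(A \setint G')$, i.e., $\tilde G \setint \neg G'' = \emptyset$. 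Computing $\neg G''$ and using $\neg A \subseteq G$, $\neg A' \subseteq G'$ rewrites this as two conditions: $\tilde G \setint G' \setint \neg G = \emptyset$ (which is $G' \setint \tilde G \subseteq G$) and $\tilde G \setint A \setint \neg A' = \emptyset$ (which follows from $A \setint G' \setint \tilde G \subseteq A'$ since $\neg A' \subseteq G'$).

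The main obstacle is bookkeeping: the closed-form condition for $\cont' \parallel \cont'' \le \cont$ mixes assumptions and guarantees asymmetrically, and the candidate $G''$ has three overlapping summands, so the simplifications collapse only after repeated applications of canonicity. Organizing the argument around the two primitive inclusions above, and invoking $\neg A \subseteq G$, $\neg A' \subseteq G'$ exactly where needed, keeps the calculation tractable.
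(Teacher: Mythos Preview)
Your argument is correct, with one expository slip: in the parenthetical you say ``largest $A''$, smallest $G''$,'' but largest in the refinement order means \emph{smallest} assumptions and \emph{largest} guarantees. Your actual computations go in the right direction (you show $A''\subseteq\tilde A$ and $\tilde G\subseteq G''$), so this is a typo rather than a gap.

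As for comparison: the paper does not supply a proof of this proposition---it is quoted as Theorem~3.5 of an external reference. The only related step in the paper is the remark immediately after, that the closed form coincides with $\cont\bullet(\cont')^{-1}$; once the merging formula is known, that is a one-line verification of the expression (though not, by itself, a proof of the universal property). Your route is the honest one: you unpack the adjunction $\cont'\parallel\cont''\le\cont$ into the two primitive inclusions and check both feasibility and maximality directly. The cost is the bookkeeping you already identified; the benefit is that nothing beyond the composition formula, the order characterization, and canonicity is used.
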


For an in-depth study of the notion of a quotient across several compositional theories, see~\cite{EPTCS326.14}. We can readily show that
\begin{align}\label{qixufksbk}
    \cont / \cont' = \cont \bullet (\cont')^{-1}.
\end{align}

\subsubsection{Separation}

Just like composition has an adjoint operation (the quotient), merging has an adjoint. For contracts $\cont$ and $\cont'$, we define the operation of \emph{separation}, denoted $\cont \div \cont'$, as the smallest contract $\cont''$ satisfying
$
\cont \le \cont' \bullet \cont''
$.
This operation has a closed-form solution:
\begin{proposition}[Theorem 3.12 of~\cite{contractMerging}]
Let $\cont = (A, G)$ and $\cont' = (A', G')$ be two AG contracts. Then
$\cont \div \cont' = \left( A \setint G' \setunion \neg (G \setint A'), G \setint A' \right)$.
\end{proposition}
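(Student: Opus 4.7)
The plan is to reduce separation to the already-established closed form for the quotient by exploiting the duality between merging and composition. The key ingredient is that the reciprocal is an involutive, order-reversing bijection on canonical contracts: since $\cont \le \cont'$ iff $G \subseteq G'$ and $A' \subseteq A$, swapping the two coordinates flips the refinement order, so $\cont \le \cont'$ iff $\cont'^{-1} \le \cont^{-1}$. Canonicity is preserved since $\cont = (A,G)$ with $A \setunion G = \behaviorset$ has reciprocal $(G,A)$ with $G \setunion A = \behaviorset$.

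Next I would use the duality identity $(\cont' \bullet \cont'')^{-1} = \cont'^{-1} \parallel \cont''^{-1}$ to translate the defining inequality $\cont \le \cont' \bullet \cont''$ of separation into $\cont'^{-1} \parallel \cont''^{-1} \le \cont^{-1}$. By the quotient adjunction this holds iff $\cont''^{-1} \le \cont^{-1} / \cont'^{-1}$, and applying the reciprocal one more time this is equivalent to $(\cont^{-1} / \cont'^{-1})^{-1} \le \cont''$. Hence the \emph{smallest} $\cont''$ satisfying the original inequality is $\cont \div \cont' = (\cont^{-1} / \cont'^{-1})^{-1}$.

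The remainder is a mechanical substitution. With $\cont^{-1} = (G, A)$ and $\cont'^{-1} = (G', A')$, the quotient formula yields $\cont^{-1} / \cont'^{-1} = (G \setint A',\, A \setint G' \setunion \neg(G \setint A'))$, whose reciprocal is precisely $(A \setint G' \setunion \neg(G \setint A'),\, G \setint A')$, matching the proposition. As a byproduct, combining this derivation with identity~\eqref{qixufksbk} gives the companion identity $\cont \div \cont' = \cont \parallel \cont'^{-1}$, paralleling the expression for the quotient.

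The main obstacle is tracking the direction of the separation adjunction through the order-reversing reciprocal: separation is the smallest, not largest, contract satisfying its refinement, so I must verify that two applications of the reciprocal compose the inequalities in the correct sense—this is precisely where the order-reversal of reciprocal is essential and not cosmetic. A secondary check is canonicity of the output, which is immediate since $(A \setint G' \setunion \neg(G \setint A')) \setunion (G \setint A') = \behaviorset$ by distributivity.
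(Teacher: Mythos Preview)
Your proof is correct. The paper does not supply its own proof of this proposition---it cites it from~\cite{contractMerging}---but immediately after the statement it records the identity $\cont \div \cont' = \cont \parallel (\cont')^{-1}$ and observes that quotient and separation are duals, which is exactly the mechanism you exploit; your argument can be read as a rigorous derivation of those remarks, first establishing $\cont \div \cont' = (\cont^{-1} / \cont'^{-1})^{-1}$ from the order-reversing reciprocal and the quotient adjunction, and then unpacking it via the known quotient formula.
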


Separation obeys
$
    \cont \div \cont' = \cont \parallel (\cont')^{-1}.
$
From this identity and~\eqref{qixufksbk}, it follows that quotient and separation are duals. For examples of merging and separation, see~\cite{contractMerging}.

\subsubsection{Implication and coimplication}

Given contracts $\cont$ and $\cont'$, the definition of implication, denoted $\cont'\to \cont$, in a lattice is $
\forall \cont''.\,\, \cont '' \land \cont'\le \cont \Leftrightarrow \cont'' \le (\cont'\to \cont)$.
In other words, $\cont'\to \cont$ is the largest contract $\cont''$ satisfying $\cont'' \land \cont'\le \cont$. The following proposition tells us how to compute this object:

\begin{proposition}\label{bkhlx}
    Let $\cont = (A, G)$ and $\cont'= (A', G')$ be two contracts. Implication has the closed form expression
    $
        \cont' \to \cont = \left( (A\setint \neg A') \setunion (G' \setint \neg G), G \setunion \neg G'\right)
    $.
\end{proposition}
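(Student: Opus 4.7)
The plan is to characterize the largest canonical contract $\cont'' = (A'', G'')$ satisfying $\cont'' \wedge \cont' \le \cont$, and then verify that the resulting $(A'', G'')$ matches the stated closed form.

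First I would unfold the conjunction using the formula from the excerpt, $\cont'' \wedge \cont' = (A'' \setunion A',\, G'' \setint G')$, and apply the refinement characterization to the inequality $\cont'' \wedge \cont' \le \cont$. This yields the two set-theoretic constraints $A \subseteq A'' \setunion A'$ and $G'' \setint G' \subseteq G$, which are equivalent to
\[
A'' \supseteq A \setint \neg A' \quad\text{and}\quad G'' \subseteq G \setunion \neg G'.
\]
To obtain the \emph{largest} such contract in the refinement order, I want the smallest possible $A''$ and the largest possible $G''$. The obvious candidates are $A'' = A \setint \neg A'$ and $G'' = G \setunion \neg G'$.

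Here lies the main (mild) subtlety: these two bounds cannot be imposed independently because $\cont''$ must be canonical, i.e., $A'' \setunion G'' = \behaviorset$. Taking $G'' = G \setunion \neg G'$ forces $\neg G'' = \neg G \setint G'$, and canonicity demands $A'' \supseteq \neg G''$. Combining this with the lower bound already derived for $A''$, I get
\[
A'' \supseteq (A \setint \neg A') \setunion (G' \setint \neg G),
\]
and the right-hand side is exactly the smallest $A''$ compatible with both the refinement condition and canonicity once $G''$ is taken maximal. This is exactly the closed form claimed in the proposition.

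To finish, I would verify canonicity by the short set-theoretic computation $A'' \setunion G'' = (A \setint \neg A') \setunion (G' \setint \neg G) \setunion G \setunion \neg G' = (A \setint \neg A') \setunion G \setunion G' \setunion \neg G' = \behaviorset$, using that $G \setunion (G' \setint \neg G) = G \setunion G'$ and $G' \setunion \neg G' = \behaviorset$. Finally, I would argue maximality: any canonical contract $\cont'' = (A'', G'')$ with $\cont'' \wedge \cont' \le \cont$ must satisfy the two inclusions above, and, being canonical, must contain $\neg G''$ in $A''$; hence its assumptions contain $(A\setint\neg A')\setunion(G'\setint\neg G)$ and its guarantees are contained in $G \setunion \neg G'$, so it refines the contract on the right-hand side of the proposition. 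Since that contract itself satisfies the constraint (by the verification just made), it is the maximum. No step should present a serious obstacle; the only point requiring care is remembering to close canonicity after maximizing $G''$, which is what produces the $(G' \setint \neg G)$ summand in the assumptions.
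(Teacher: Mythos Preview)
Your proposal is correct and follows essentially the same approach as the paper: derive the bounds $G'' \subseteq G \setunion \neg G'$ and $A'' \supseteq A \setint \neg A'$ from the refinement inequality, then invoke canonicity ($A'' \supseteq \neg G''$) to obtain the extra summand $G' \setint \neg G$ in the assumptions. The only cosmetic difference is that the paper verifies $\cont_i \land \cont' = \cont \land \cont'$ directly to establish that $\cont_i$ satisfies the constraint, whereas you rely on the equivalence of the derived set-theoretic bounds with the original inequality; both are fine.
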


Dually, we can ask what is the smallest contract $\cont''$ satisfying
$
\cont'' \lor \cont' \ge \cont
$.
We will denote this object $\cont' \nrightarrow \cont$. A similar proof yields the following proposition.
\begin{proposition}
    Let $\cont = (A, G)$ and $\cont'= (A', G')$ be two contracts. The smallest contract $\cont''$ satisfying $\cont'' \lor \cont' \ge \cont$ has the closed form expression
    $
        \cont' \nrightarrow \cont = \left( A \setunion \neg A', (G \setint \neg G') \setunion (A' \setint \neg A) \right)
    $.
\end{proposition}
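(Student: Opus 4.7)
The plan is to derive the closed form by duality with the implication of Proposition~\ref{bkhlx}. The three ingredients I would use are: (i) conjunction and disjunction are reciprocal-duals, $(\cont_a \meet \cont_b)^{-1} = \cont_a^{-1} \join \cont_b^{-1}$ (noted in the section on duality); (ii) reciprocal preserves canonicity; and (iii) refinement reverses under reciprocal, since comparing the criteria $G \subseteq G'$, $A' \subseteq A$ for $\cont \le \cont'$ against $G' \subseteq G$, $A \subseteq A'$ for $(\cont')^{-1} \le \cont^{-1}$ gives exactly the same condition.

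Combining these facts yields the chain of equivalences
\begin{align*}
\cont \le \cont'' \join \cont' &\iff (\cont'' \join \cont')^{-1} \le \cont^{-1} \\
&\iff (\cont'')^{-1} \meet (\cont')^{-1} \le \cont^{-1} \\
&\iff (\cont'')^{-1} \le (\cont')^{-1} \to \cont^{-1} \\
&\iff \cont'' \ge \bigl( (\cont')^{-1} \to \cont^{-1} \bigr)^{-1},
\end{align*}
where the third step invokes the universal property that defines implication. This shows that the set of valid $\cont''$ is an up-set with a minimum, and that minimum is $\bigl( (\cont')^{-1} \to \cont^{-1} \bigr)^{-1}$; in particular the smallest $\cont''$ exists.

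To obtain the explicit formula, I would apply Proposition~\ref{bkhlx} with $\cont$ replaced by $\cont^{-1} = (G, A)$ and $\cont'$ replaced by $(\cont')^{-1} = (G', A')$, giving
\[
(\cont')^{-1} \to \cont^{-1} = \bigl( (G \setint \neg G') \setunion (A' \setint \neg A),\; A \setunion \neg A' \bigr),
\]
and then take the reciprocal to swap the two components, producing $\cont' \nrightarrow \cont = \bigl( A \setunion \neg A',\; (G \setint \neg G') \setunion (A' \setint \neg A) \bigr)$, as claimed. The main obstacle, such as it is, is verifying the duality ingredients; but each falls out by direct inspection of the closed-form definitions of $\meet$, $\join$, and $\le$. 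As a sanity check, a direct argument in the style of Proposition~\ref{bkhlx} is also available: minimizing $G''$ subject to $G'' \supseteq G \setint \neg G'$ and maximizing $A''$ subject to $A'' \subseteq A \setunion \neg A'$ gives the ``raw'' candidate $(A \setunion \neg A',\, G \setint \neg G')$, which fails canonicity precisely on $\neg A'' = \neg A \setint A'$; enlarging $G''$ to include $A' \setint \neg A$ restores $A'' \setunion G'' = \behaviorset$ and reproduces the stated expression.
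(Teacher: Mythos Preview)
Your proof is correct. The duality chain is sound: order reversal under reciprocal, the identity $(\cont_a \lor \cont_b)^{-1} = \cont_a^{-1} \land \cont_b^{-1}$, and the universal property of implication from Proposition~\ref{bkhlx} combine exactly as you claim, and the substitution into the closed form for implication is carried out correctly.

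The paper, however, does not prove this proposition via duality. It simply states that ``a similar proof yields the following proposition,'' meaning the intended argument is a direct rerun of the proof of Proposition~\ref{bkhlx}: verify that the candidate $\cont_i$ satisfies $\cont_i \lor \cont' \ge \cont$, then show that any $\cont''$ with $\cont'' \lor \cont' \ge \cont$ must satisfy $\cont'' \ge \cont_i$ by unpacking the constraints on $A''$ and $G''$ and using canonicity. The duality relation $\cont' \to \cont = \bigl((\cont')^{-1} \nrightarrow \cont^{-1}\bigr)^{-1}$ is then recorded as a \emph{consequence} of having both closed forms in hand, not as the proof technique. Your approach inverts this logic: you use duality to transport the already-proved implication result, which is arguably cleaner since it avoids repeating the canonicity bookkeeping and makes the dual pairing do the work. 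Your closing ``sanity check'' is in fact essentially the paper's intended argument, so you have covered both routes.
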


We observe that
$$
\cont' \to \cont = \left( G \setunion \neg G', (A\setint \neg A') \setunion (G' \setint \neg G)\right)^{-1}  = \left( (\cont')^{-1} \nrightarrow \cont^{-1}\right)^{-1},
$$
which shows that implication and coimplication are duals.

\subsection{Summary of binary operations}

\begin{comment}
\begin{table}
    \centering
    {\small
    \begin{tabular}{l  l}
        \toprule
        Conjunction & Disjunction \\
        $\cont \meet \cont' = (A \setunion A', G \setint G')$ &
        $\cont \join \cont' = (A \setint A', G \setunion G')$ \\
        Composition & Merging \\
        $\cont_1 \parallel \cont_2 = \left( A_1 \setint A_2 \setunion \neg(G_1 \setint G_2) , G_1 \setint G_2\right)$ &
        $\cont_1 \bullet \cont_2 = \left( A_1 \setint A_2, G_1 \setint G_2 \setunion \neg(A_1 \setint A_2) \right)$ \\
        Quotient & Separation  \\
        $\cont / \cont' = \left( A \setint G', G \setint A' \setunion \neg (A \setint G')\right)$ &
        $\cont \div \cont' = \left( A \setint G' \setunion \neg (G \setint A'), G \setint A' \right)$ \\
        Implication & Coimplication \\
        $\cont' \to \cont = \left( (A\setint \neg A') \setunion (G' \setint \neg G), G \setunion \neg G'\right)$ &
        $\cont' \nrightarrow \cont = \left( A \setunion \neg A', (G \setint \neg G') \setunion (A' \setint \neg A) \right)$\\
        \bottomrule
    \end{tabular}}
    \caption{Closed-form expressions of contract operations}
    \label{{\documentPath}qlkshs}
\end{table}
\end{comment}

The following diagram shows how all AG contract operations are related.
{\scriptsize
$$
{
{
    \begin{tikzcd}[row sep = 0.1em, column sep = tiny]
    & \text{Conjunction } %\land
    \arrow[ddd, "\text{Dual}\,\,"', Leftrightarrow] 
    \arrow[rrr, "\text{Right adjoint}", Rightarrow]
    & [2.9em] & & \text{{{Implication}} } %\rightarrow
    \\ 
    \text{Order} \arrow[ur, Rightarrow] \arrow[ddr, Rightarrow] & & \\
    \\
    & \text{Disjunction } %\lor
    \arrow[rrr, "\text{Left adjoint}", Rightarrow] & & &
    \text{{{Coimplication}} } %\nrightarrow
    \arrow[uuu, "\,\,\text{Dual}"', Leftrightarrow] 
    %\end{tikzcd}
    %%%%%%
    %\begin{tikzcd}[row sep = 0.1em, column sep = tiny]
    \end{tikzcd}
    \begin{tikzcd}[row sep = 0.1em, column sep = tiny]
    & \text{Composition } %\parallel
    \arrow[ddd, "\text{Dual}\,\,"', Leftrightarrow] 
    \arrow[rrr, "\text{Right adjoint}", Rightarrow]
    & [1.5em] & & \text{{{Quotient}} } %/
    \\ 
    \text{Axiom} \arrow[ur, Rightarrow] \arrow[ddr, Rightarrow] & & \\
    \\
    & \text{{{Merging}} } %\bullet
    \arrow[rrr, "\text{Left adjoint}", Rightarrow] & & &
    \text{{{Separation}} } %\div
    \arrow[uuu, "\,\,\text{Dual}"', Leftrightarrow]
    \end{tikzcd}
}}$$}

Two operations---conjunction and disjunction---come from the definition of order, and two---composition and merging---are defined by axiom. The rest of the operations are adjoints of these four.
%Tables~\ref{{\documentPath}qlkshs} and~\ref{{\documentPath}vgldnfrif} show the closed-form expressions and the duality relations for all operations. Moreover, we have the following relations between composition, quotient, merging, and separation: $\cont / \cont' = \cont \bullet (\cont')^{-1}$ and $\cont \div \cont' = \cont \parallel (\cont')^{-1}$.

\begin{comment}
\begin{table}
    \centering
    \begin{tabular}{l  l}
        \toprule
        Composition and merging &
        $\cont_1 \parallel \cont_2 = \left( \cont_1^{-1} \bullet \cont_2^{-1} \right)^{-1}$ 
        \\
        Quotient and separation &
        $\cont / \cont' = \left(\cont^{-1} \div (\cont')^{-1}\right)^{-1}$ 
        \\
        Conjunction and disjunction &
        $\cont_1 \land \cont_2 = \left( \cont_1^{-1} \lor \cont_2^{-1} \right)^{-1}$ 
        \\
        Implication and coimplication &
        $\cont' \to \cont = \left( (\cont')^{-1} \nrightarrow \cont^{-1} \right)^{-1}$ 
        \\
        \bottomrule
    \end{tabular}
    \caption{Duality relations}
    \label{{\documentPath}vgldnfrif}
\end{table}
\end{comment}

\section{Algebraic structures within contracts}
\label{sc:algstr}

\begin{table*}
    \centering
    {\footnotesize
    \begin{tabular}{l  l}
        \toprule
        Conjunction & Disjunction \\
        $\cont \meet \cont' = (a \lor a', g \land g')$ &
        $\cont \join \cont' = (a \land a', g \lor g')$ \\
        Composition  & Merging \\
        $\cont_1 \parallel \cont_2 = \left( a_1 \land a_2 \lor \neg(g_1 \land g_2) , g_1 \land g_2\right)$ &
        $\cont_1 \bullet \cont_2 = \left( a_1 \land a_2, g_1 \land g_2 \lor \neg(a_1 \land a_2) \right)$ \\
        Quotient & Separation \\
        $\cont / \cont' = \left( a \land g', g \land a' \lor \neg (a \land g')\right)$ &
        $\cont \div \cont' = \left( a \land g' \lor \neg (g \land a'), g \land a' \right)$ \\
        Implication & Coimplication \\
        $\cont' \to \cont = \left( (a\land \neg a') \lor (g' \land \neg g), g \lor \neg g'\right)$ &
        $\cont' \nrightarrow \cont = \left( a \lor \neg a', (g \land \neg g') \lor (a' \land \neg a) \right)$\\
        \bottomrule
    \end{tabular}}
    \caption{Closed-form expressions of operations for contracts over a Boolean algebra}
    \label{{\documentPath}kbdlsk}
    \vspace{-5mm}
\end{table*}

In this section, we investigate the corollary of \Question{Question 1: Since both viewpoints and subsystems specifications need to be combined, does the order among these operations influence the result? If yes, is there a best order?}

Inspection of the various binary formulas for AG contracts suggests that contracts can be defined over any Boolean algebra, not just that corresponding to properties over a set of behaviors. From now on, we deal with contracts defined over an arbitrary Boolean algebra and embark on a study of the relations between the various contract operations.

Let $\balg$ be a Boolean algebra with bottom and top elements $0_B$ and $1_B$, respectively. We form the contract algebra $\calg(\balg)$ associated with $\balg$. The elements of $\calg(\balg)$ are all pairs $(a, b) \in \balg^2$ such that $a \lor b = 1_B$. The notions of order and the binary operations work exactly the same as for AG contracts over sets of behaviors. Table~\ref{{\documentPath}kbdlsk} summarizes these operations.

The contract $1 = (0_B, 1_B)$ is larger than any contract. $0 = (1_B, 0_B)$ is smaller than any contract. The contract $\id = (1_B, 1_B)$ is an identity for composition and merging. $1$ is an identity for conjunction, and $0$ for disjunction.
Table~\ref{{\documentPath}nyfclfr} shows how various operations behave with respect to the distinguished elements.

\begin{table*}
    \centering
    {\footnotesize
    \begin{tabular}{l l l l }
        \toprule 
                      $0$ & $1$ & $\id$ \\
        \midrule
        $\cont \land 0 = 0$ & $\cont \land 1 = \cont$ & $(a,g) \land \id = (1_B, g)$\\
        $\cont \lor 0 = \cont$ & $\cont \lor 1 = 1$ & $(a,g) \lor \id = (a, 1_B)$\\
        $\cont \parallel 0 = 0$ & $(a,g) \parallel 1 = (\neg g, g)$ & $\cont \parallel \id = \cont$\\
        $(a,g) \bullet 0 = (a, \neg a)$ & $\cont \bullet 1 = 1$ & $\cont \bullet \id = \cont$\\
        \midrule
        %%%%%%%%%%%%%%%%%%%%
        $\cont / 0 = 1$ & $(a,g) / 1 = (a, \neg a)$ & $\cont / \id = \cont$ \\
        $0 / (a,g) = (g, \neg g)$ & $1 / \cont = 1$  & $\id / \cont = \cont^{-1}$ \\[0.2cm]
        %%%%%%%%%%%%%%%%%%%
        $(a,g) \div 0 = (\neg g, g)$ & $\cont \div 1 = 0$ & $\cont \div \id = \cont$ \\
        $0 \div \cont = 0$ & $1 \div (a,g) = (\neg a, a)$ & $\id \div \cont = \cont^{-1}$ \\[0.2cm]
        %%%%%%%%%%%%%%%%%%%
        $(a, g) \to 0 = (g, \neg g)$ & $\cont \to 1 = 1$ & $(a,g) \to \id = (\neg a, 1_B)$ \\
        $0 \to \cont = 1$ & $1 \to \cont = \cont$ & $\id \to (a,g) = (\neg g, g)$ \\[0.2cm]
        %%%%%%%%%%%%%%%%%%%
        $\cont \nrightarrow 0 = 0$ & $(a,g) \nrightarrow 1 = (\neg a, a)$ & $(a,g) \nrightarrow \id = (1_B, \neg g)$ \\
        $0 \nrightarrow \cont = \cont$ & $1 \nrightarrow \cont = 0$ & $\id \nrightarrow (a, g) = (a, \neg a)$ \\
        \bottomrule
    \end{tabular}}
    \caption{Contract operations and the distinguished elements}
    \label{{\documentPath}nyfclfr}
\end{table*}
\subsection{Monoids}

We begin to study the interactions among the various operations.
%The main result here is that the operations of conjunction, disjunction, composition, and merging can all be expressed in terms of each other. Moreover, the monoids they generate are all isomorphic.
We recall that a monoid is a semigroup with identity, i.e., a set together with an associative binary operation and an identity element for that operation. A contract algebra contains several monoids:

\begin{proposition}\label{kahfllnk}
    $\andmon(\balg) = (\calg(\balg), \land, 1_B)$, $\ormon(\balg) = (\calg(\balg), \lor, 0_B)$, $\parmon(\balg) = (\calg(\balg), \parallel, \id)$, and $\mermon(\balg) = (\calg(\balg), \bullet, \id)$ are idempotent, commutative monoids.
\end{proposition}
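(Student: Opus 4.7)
The plan is to verify each property for each of the four operations separately, working directly from the closed-form expressions in Table~\ref{{\documentPath}kbdlsk}. For each, we must check (i) closure in $\calg(\balg)$, i.e., canonicity $a \lor g = 1_B$ is preserved, (ii) commutativity, (iii) associativity, (iv) the claimed identity, and (v) idempotence. Commutativity and idempotence will be immediate in every case from the symmetry of the formulas and the idempotence of $\land, \lor$ in $\balg$.

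For $\andmon$ and $\ormon$ the work is routine: closure follows since $(a \lor a') \lor (g \land g') = (a \lor a' \lor g) \land (a \lor a' \lor g') = 1_B$ using canonicity of $\cont, \cont'$ (and dually for $\join$); associativity and the identities $1 = (0_B, 1_B)$ and $0 = (1_B, 0_B)$ are inherited componentwise from $\balg$.

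The interesting cases are $\parmon$ and $\mermon$. For composition, closure is automatic because the assumption already contains $\neg(g_1 \land g_2)$, so the assumption joined with the guarantee is $1_B$. The identity claim $\cont \parallel \id = \cont$ with $\id = (1_B, 1_B)$ reduces to showing $a \lor \neg g = a$, which holds because canonicity $a \lor g = 1_B$ gives $\neg g \le a$. Idempotence is the same calculation. The main obstacle is associativity. I would compute $(\cont_1 \parallel \cont_2) \parallel \cont_3$ from the formula, then simplify the assumption
\[
\bigl( a_1 \land a_2 \lor \neg(g_1 \land g_2) \bigr) \land a_3 \;\lor\; \neg(g_1 \land g_2 \land g_3)
\]
by expanding $\neg(g_1 \land g_2 \land g_3) = \neg(g_1 \land g_2) \lor \neg g_3$ and absorbing the term $\neg(g_1 \land g_2) \land a_3$ into $\neg(g_1 \land g_2)$. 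This collapses the expression to $a_1 \land a_2 \land a_3 \lor \neg(g_1 \land g_2 \land g_3)$, which together with guarantee $g_1 \land g_2 \land g_3$ is manifestly symmetric in the three arguments; hence $(\cont_1 \parallel \cont_2) \parallel \cont_3 = \cont_1 \parallel (\cont_2 \parallel \cont_3)$.

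For $\mermon$, rather than redo the algebra, I would invoke the duality between composition and merging: $\cont \bullet \cont' = (\cont^{-1} \parallel \cont'^{-1})^{-1}$, together with the fact that reciprocal is an involution on $\calg(\balg)$ sending $\id$ to $\id$. All five properties for $\bullet$ then transport from the corresponding properties for $\parallel$. This finishes the proof for all four monoids simultaneously.
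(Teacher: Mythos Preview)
Your proposal is correct and follows essentially the same approach as the paper: associativity of composition is verified by direct computation collapsing to the symmetric form $(a_1 \land a_2 \land a_3 \lor \neg(g_1 \land g_2 \land g_3),\, g_1 \land g_2 \land g_3)$, and merging is handled via duality through the reciprocal. The only minor differences are that you additionally verify closure (which the paper omits) and handle disjunction componentwise rather than via duality with conjunction, but these are cosmetic.
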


It turns out these monoids are isomorphic:
\newcommand{\rotsimeq}{\rotatebox[origin=c]{90}{$\simeq$}}
\newcommand{\rotsimeqm}{\rotatebox[origin=c]{-90}{$\simeq$}}

\begin{proposition}\label{nkjhlbskf}
    The monoids $\andmon(\balg)$, $\ormon(\balg)$, $\parmon(\balg)$, and $\mermon(\balg)$ are isomorphic. Moreover, the following diagram commutes, where $\theta_g (a, g) = (\neg(a \land g),g)$ and $\theta_a (a, g) = (a, \neg(a \land g))$:

    \begin{equation}\label{lkjvsfv}
        \begin{tikzcd}
            \andmon(\balg) \arrow[d, "\theta_g"', "\rotsimeqm", leftrightarrow]
            \arrow[r, "(\cdot)^{-1}", "\simeq"', leftrightarrow] 
            &
            \ormon(\balg)
            \arrow[d, "\theta_a", "\rotsimeq"', leftrightarrow]
            \\
            \parmon(\balg)
            \arrow[r, "(\cdot)^{-1}"', "\simeq", leftrightarrow] &
            \mermon(\balg)
        \end{tikzcd}
    \end{equation}
\end{proposition}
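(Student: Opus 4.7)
The plan is to establish each of the four arrows of~\eqref{lkjvsfv} as a monoid isomorphism and then to chase the square. Proposition~\ref{kahfllnk} already supplies the monoid structures, so the real task is to exhibit structure-preserving bijections and finish with a one-line diagram check.

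For the two horizontal arrows I would invoke the duality relations already noted in Section~\ref{sc:agcont}, namely $(\cont \land \cont')^{-1} = \cont^{-1} \lor \cont'^{-1}$ and $(\cont \parallel \cont')^{-1} = \cont^{-1} \bullet \cont'^{-1}$. Since $(\cdot)^{-1}$ is an involution on $\calg(\balg)$ it is automatically bijective, and a quick check confirms that it moves identities correctly: $1^{-1} = 0$, while $\id^{-1} = \id$ is the common identity of $\parallel$ and $\bullet$. For the left vertical arrow $\theta_g$, well-definedness is immediate since $\neg(a \land g) \lor g = 1_B$, and $\theta_g(1) = \theta_g(0_B, 1_B) = (1_B, 1_B) = \id$. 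Involutivity of $\theta_g$ is a short simplification: $\neg(\neg(a \land g) \land g) = \neg(\neg a \land g) = a \lor \neg g$, which equals $a$ by the canonicity constraint $\neg g \le a$. The substantive step is the homomorphism identity $\theta_g(\cont \land \cont') = \theta_g(\cont) \parallel \theta_g(\cont')$: both sides have guarantees $g \land g'$ by inspection, while the assumption terms need to be brought to a common normal form, which I expect to be $(\neg a \land \neg a') \lor \neg g \lor \neg g'$. The argument for $\theta_a \colon \ormon(\balg) \to \mermon(\balg)$ is symmetric, or can be recovered after the fact from the other three isomorphisms together with commutativity of the square.

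Commutativity of~\eqref{lkjvsfv} is then a one-line calculation: the top-then-right path sends $(a,g) \mapsto (g,a) \mapsto (g, \neg(g \land a))$, and the left-then-bottom path sends $(a,g) \mapsto (\neg(a \land g), g) \mapsto (g, \neg(a \land g))$; the two results coincide by commutativity of meet. The main obstacle will be the Boolean computation that $\theta_g$ preserves the monoid operation: both sides expand into several meet-of-join terms, and one must apply distributivity together with the canonicity constraints $a \lor g = 1_B$ and $a' \lor g' = 1_B$ to absorb the redundant disjuncts. Every other claim reduces to a short definition check or follows directly from duality.
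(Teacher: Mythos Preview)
Your proposal is correct and follows essentially the same approach as the paper: the horizontal isomorphisms come from duality, the vertical isomorphism $\theta_g$ is established by checking it is an involution and a monoid homomorphism (your anticipated normal form $(\neg a \land \neg a') \lor \neg g \lor \neg g'$ for the assumption component is exactly what the computation yields), and $\theta_a$ together with commutativity of the square follows from the other three arrows. The only cosmetic difference is that the paper derives $\theta_a$ as the composite $(\cdot)^{-1}\circ\theta_g\circ(\cdot)^{-1}$ (making the square commute by construction and then reading off the formula), whereas you verify commutativity by the direct one-line chase; both are equivalent.
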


\subsection{Maps between monoids}

The previous result showed how to express contract operations in terms of others. Now we explore how to map contract monoids across their underlying Boolean algebras.

Suppose we have two Boolean algebras, $\balg$ and $\balg'$. Due to Proposition~\ref{nkjhlbskf}, it is sufficient to study the structure of the maps between the contract monoids $\parmon(\balg)$ and $\parmon(\balg')$ in order to understand the structure of the maps between all contract monoids associated with each Boolean algebra. First we study maps that allow us to construct and split contracts. Then we consider the general maps.

%\subsubsection{Elementary maps.}

Let $\boolandmon(\balg)$ and $\boolormon(\balg)$ be the monoids $\boolandmon(\balg) = (\balg, \land, 1_B)$ and $\boolormon(\balg) = (\balg, \lor, 0_B)$. We define the two monoid maps
\begin{align*}
    &\iota_a\colon \boolandmon(\balg) \to \parmon(\balg) \quad a \mapsto (a, 1_B) \\
    &\iota_g\colon \boolandmon(\balg) \to \parmon(\balg) \quad g \mapsto (1_B, g).
\end{align*}

These maps generate an epic monoid map $\pi\colon \boolandmon(\balg) \times \boolandmon(\balg) \to \parmon(\balg)$ defined as
\[
    (a, g) \mapsto \iota_a(a) \parallel \iota_g(g) = (g \to a, g).
\]

Similarly, we have monoid maps that allow us to split a contract:
\begin{align*}
    &\pi_g \colon \parmon(\balg) \to \boolandmon(\balg) \quad \pi_g(a, g) = g \\
    &\pi_a \colon \andmon(\balg) \to \boolormon(\balg) \quad \pi_a(a,g) = a.
\end{align*}
We use the monoid isomorphisms~\eqref{lkjvsfv} to obtain a map $\andmon(\balg) \to \boolandmon(\balg)$ from the last morphism:
\begin{align*}
\neg \circ \pi_a \circ \theta_g \colon \parmon(\balg) & \to \boolandmon(\balg) \\
(a,g) & \mapsto a \land g.
\end{align*}

The two maps $\parmon(\balg) \to \boolandmon(\balg)$ yield the monic monoid map
\begin{align*}
\iota \colon \parmon(\balg) & \to \boolandmon(\balg) \times \boolandmon(\balg) \\
(a, g) & \mapsto (a \land g, g).
\end{align*}
This map is left-invertible:
\[
    \pi \circ \iota = \id.
\]

%\subsubsection{General maps.}
The elementary maps just described enable us to find the general structure between the monoid maps between the parallel monoids corresponding to two Boolean algebras.

\begin{theorem}\label{bnwjdnhx}
    Let $f\colon \parmon(\balg) \to \parmon(\balg')$. Then we can write $f$ as $$f = \pi \circ \left( l_a(ag) l_g(g) r_a(ag) r_g(g) , r_a(ag) r_g(g) \right) \circ \iota,$$where $l_a, l_g, r_a, r_g\colon \boolandmon(\balg) \to \boolandmon(\balg')$ are monoid morphisms.
\end{theorem}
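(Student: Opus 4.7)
The plan is to transfer $f$ through the monic map $\iota$ into a simpler algebraic setting, use a canonical factorization of elements in the image of $\iota$, and then extract the four required morphisms by restricting $f$ to ``pure-assumption'' and ``pure-guarantee'' contracts.

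First, I would observe that $\iota$ is a monoid isomorphism from $\parmon(\balg)$ onto the submonoid $S_\balg := \{(x, y) \in \balg^2 : x \leq y\}$ of $\boolandmon(\balg) \times \boolandmon(\balg)$ under componentwise meet, with inverse $(x, y) \mapsto (x \lor \neg y, y)$; this is a direct check from canonicity ($a \lor g = 1_B$) and the closed-form of $\parallel$. Under this identification, $f$ becomes a monoid morphism $\widetilde{f} : S_\balg \to S_{\balg'}$ which I write coordinatewise as $\widetilde{f}(x, y) = (F(x, y), G(x, y))$, where $F, G : S_\balg \to \boolandmon(\balg')$ are monoid morphisms satisfying $F(x,y) \leq G(x,y)$ for all $(x,y) \in S_\balg$.

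The crux is the observation that in $S_\balg$ every element factors as $(x, y) = (x, 1_B) \cdot (y, y)$, since $x \leq y$ forces $x \land y = x$. Both factors lie in $S_\balg$, so applying $F$ and $G$ yields $F(x, y) = F(x, 1_B)\, F(y, y)$ and $G(x, y) = G(x, 1_B)\, G(y, y)$. This suggests defining $l_a(x) := F(x, 1_B)$, $r_a(x) := G(x, 1_B)$, $l_g(y) := F(y, y)$, and $r_g(y) := G(y, y)$. Each is a monoid morphism $\boolandmon(\balg) \to \boolandmon(\balg')$ by a one-line check; for instance, $l_g(y_1 y_2) = F((y_1, y_1)(y_2, y_2)) = F(y_1, y_1)\, F(y_2, y_2) = l_g(y_1)\, l_g(y_2)$, and similarly for $l_a$, $r_a$, $r_g$. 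Then $G(x, y) = r_a(x)\, r_g(y)$ on the nose, and $F \leq G$ together with Boolean absorption gives $F(x, y) = F(x, y)\, G(x, y) = l_a(x)\, l_g(y)\, r_a(x)\, r_g(y)$. Substituting $x = a \land g$ and $y = g$, precomposing with $\iota$, and postcomposing with $\pi$ (which acts as a left inverse of $\iota$, as noted in the text) recovers the claimed decomposition.

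The main obstacle is purely notational: juggling three equivalent representations of a contract -- in $\parmon(\balg)$, in the submonoid $S_\balg$, and as an arbitrary pair in $\boolandmon(\balg)^2$ -- while tracking the role of the inequality $F \leq G$ in absorbing the extraneous $r_a r_g$ factor on the assumption side. Once the factorization $(x, y) = (x, 1_B) \cdot (y, y)$ is recognized, the remainder of the proof is a mechanical verification.
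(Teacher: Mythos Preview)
Your proof is correct and follows essentially the same approach as the paper: both lift $f$ through $\iota$, factor elements into a ``pure-assumption'' piece and a ``pure-guarantee'' piece, and read off the four monoid morphisms. The only cosmetic difference is that you work directly in the submonoid $S_\balg = \iota(\parmon(\balg))$ using the factorization $(x,y)=(x,1)(y,y)$ and the absorption $F = F\wedge G$, whereas the paper passes through the full product $\boolandmon(\balg)^2$ via $f^\flat = \iota\circ f\circ\pi$, uses $(a,g)=(a,1)(1,g)$, and recovers the $r_a r_g$ factor by applying $\iota\circ\pi$ on the output; under $\pi\circ\iota=\id$ these computations coincide and yield the same $l_a,l_g,r_a,r_g$.
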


\subsection{Semirings}

Now that we have four isomorphic monoids, we look for additional algebraic structure within the contract algebra, namely, the existence of semirings. This will capture the interactions between algebraic operations. 
First we study the distributivity of the binary operations. Distributivity and semi-distributivity answer the corollary of \Question{Question 1: Does the order between subspecification composition and viewpoint combination matter?}

\begin{table*}[ht!]
    \centering
    {\scriptsize
    \begin{tabular}{l | c c c c }
        \toprule
                      & Conjunction & Disjunction & Composition & Merging \\
        \midrule
        Conjunction   &
        $\begin{aligned} & \cont \land (\cont' \land \cont'') = \\ & (\cont \land \cont') \land (\cont \land \cont'')\end{aligned}$ &
        $\begin{aligned} & \cont \land (\cont' \lor \cont'') = \\ & (\cont \land \cont') \lor (\cont \land \cont'')\end{aligned}$ &
        $\begin{aligned} & \cont \land (\cont' \parallel \cont'') = \\ & (\cont \land \cont') \parallel (\cont \land \cont'')\end{aligned}$ &
        $\begin{aligned} & \id \land (1 \bullet 0) \ne \\ & (\id \land 1) \bullet (\id \land 0)\end{aligned}$
        \\[0.5cm]
        Disjunction   & 
        $\begin{aligned} & \cont \lor (\cont' \land \cont'') = \\ & (\cont \lor \cont') \land (\cont \lor \cont'')\end{aligned}$ &
        $\begin{aligned} & \cont \lor (\cont' \lor \cont'') = \\ & (\cont \lor \cont') \lor (\cont \lor \cont'')\end{aligned}$ &
        $\begin{aligned} & \id \lor (1 \parallel 0) \ne \\ & (\id \lor 1) \parallel (\id \lor 0)\end{aligned}$ &
        $\begin{aligned} & \cont \lor (\cont' \bullet \cont'') = \\ & (\cont \lor \cont') \bullet (\cont \lor \cont'')\end{aligned}$
        \\[0.5cm]
        Composition    &   
        $\begin{aligned} & \cont \parallel (\cont' \land \cont'') = \\ & (\cont \parallel \cont') \land (\cont \parallel \cont'')\end{aligned}$ &
        $\begin{aligned} & \cont \parallel (\cont' \lor \cont'') = \\ & (\cont \parallel \cont') \lor (\cont \parallel \cont'')\end{aligned}$ &
        $\begin{aligned} & \cont \parallel (\cont' \parallel \cont'') = \\ & (\cont \parallel \cont') \parallel (\cont \parallel \cont'')\end{aligned}$ &
        $\begin{aligned} & 1 \parallel (0 \bullet \id) \ne \\ & (1 \parallel 0) \bullet (1 \parallel \id)\end{aligned}$
        \\[0.5cm]
        Merging       & 
        $\begin{aligned} & \cont \bullet (\cont' \land \cont'') = \\ & (\cont \bullet \cont') \land (\cont \bullet \cont'')\end{aligned}$ &
        $\begin{aligned} & \cont \bullet (\cont' \lor \cont'') = \\ & (\cont \bullet \cont') \lor (\cont \bullet \cont'')\end{aligned}$ &
        $\begin{aligned} & 0 \bullet (1 \parallel \id) \ne \\ & (0 \bullet 1) \parallel (0 \bullet \id)\end{aligned}$ &
        $\begin{aligned} & \cont \bullet (\cont' \bullet \cont'') = \\ & (\cont \bullet \cont') \bullet (\cont \bullet \cont'')\end{aligned}$
        \\   
        \bottomrule
    \end{tabular}}
    \caption{Distributivity of contract operations}
    \label{{\documentPath}kcsdy}
\end{table*}

\begin{proposition}\label{qocvksf}
    Table~\ref{{\documentPath}kcsdy} shows whether the binary operations displayed in the rows distribute over the binary operations in the columns.
\end{proposition}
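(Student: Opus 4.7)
The statement contains sixteen distributivity claims, which I would organize into three groups and dispatch as follows. The diagonal entries are self-distributivity of each operation over itself; the off-diagonal positive entries assert genuine distributivity between two different operations; and the four off-diagonal negative entries (flagged with $\ne$) require concrete counterexamples. My plan is to reduce the bulk of the work via the duality $(\cdot)^{-1}$ and the isomorphisms $\theta_a, \theta_g$ of Proposition~\ref{nkjhlbskf}, so that only a handful of Boolean identities actually need to be verified by hand.

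\textbf{Diagonal entries.} For any idempotent commutative monoid $(M, \circ, e)$, one has $x \circ (y \circ z) = (x \circ x) \circ (y \circ z) = x \circ y \circ x \circ z = (x \circ y) \circ (x \circ z)$. The four diagonal entries therefore follow immediately from Proposition~\ref{kahfllnk}.

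\textbf{Positive off-diagonal entries.} There are eight of these. I would first verify one entry in each row/column by direct computation on the closed forms in Table~\ref{{\documentPath}kbdlsk}, each of which collapses to a single application of Boolean distributivity. For instance, writing $\cont = (a,g)$ and analogously for $\cont',\cont''$:
\[
\cont \land (\cont' \lor \cont'') = \bigl(a \lor (a' \land a''),\ g \land (g' \lor g'')\bigr),
\]
while $(\cont \land \cont') \lor (\cont \land \cont'') = \bigl((a \lor a') \land (a \lor a''),\ (g \land g') \lor (g \land g'')\bigr)$; equality is Boolean distributivity together with canonicity. The remaining positive entries are obtained without fresh computation by applying the duality functor $(\cdot)^{-1}$ (which swaps $\land \leftrightarrow \lor$ and $\parallel \leftrightarrow \bullet$) and the isomorphisms $\theta_a,\theta_g$ of diagram~\eqref{lkjvsfv} to cases already proved. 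For example, distributivity of $\land$ over $\lor$ implies, by applying $(\cdot)^{-1}$ to both sides, distributivity of $\lor$ over $\land$, and the analogous transfers turn the $\{\land,\lor\}$ calculations into $\{\parallel,\bullet\}$ calculations and the $\{\land,\parallel\}$ entries into $\{\lor,\bullet\}$ ones.

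\textbf{Negative entries.} Each of the four $\ne$ cells in Table~\ref{{\documentPath}kcsdy} is a concrete inequality between closed terms built from $0,1,\id$. I would verify each by direct substitution using Table~\ref{{\documentPath}nyfclfr}. For example, for $\id \land (1 \bullet 0)$: from Table~\ref{{\documentPath}nyfclfr}, $1 \bullet 0 = 1$, so the LHS is $\id \land 1 = \id$; on the other hand, $\id \land 1 = \id$ and $\id \land 0 = 0$, so the RHS is $\id \bullet 0 = (1_B, 0_B) = 0 \ne \id$. The remaining three failures are checked in the same mechanical way, again leveraging duality so that, e.g., the $\lor/\parallel$ counterexample is the reciprocal of the $\land/\bullet$ one.

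\textbf{Main obstacle.} The genuine work is the Boolean-algebraic bookkeeping for the four $\parallel$ and $\bullet$ entries, where the closed forms contain negations of conjunctions of guarantees/assumptions; the canonicity identity $a \lor g = 1_B$ must be invoked to simplify both sides to a common normal form. Once one such case (e.g.\ $\cont \parallel (\cont' \land \cont'')$) is reduced carefully, duality and $\theta_a,\theta_g$ handle the rest, so the proof stays manageable despite the size of the table.
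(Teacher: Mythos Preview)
Your treatment of the diagonal entries (idempotence + associativity), the four counterexamples (via Table~\ref{{\documentPath}nyfclfr}), and the use of the reciprocal $(\cdot)^{-1}$ to halve the number of positive cases is fine and matches the paper. The gap is your proposed use of $\theta_a,\theta_g$ to transfer distributivity results. These maps are \emph{monoid} isomorphisms for a single operation only; they do not respect a second operation, so they cannot transport a distributive law. Concretely, $\theta_g\colon\parmon(\balg)\to\andmon(\balg)$ sends $\parallel$ to $\land$, but it does not send $\lor$ to $\lor$ (nor to $\bullet$): if it did, $\theta_g$ would be a semiring isomorphism $\parsr(\balg)\to\andsr(\balg)$, which Proposition~\ref{nhwlfchcjbg} explicitly rules out. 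Even more directly, your proposed transfer ``$\{\land,\lor\}$ into $\{\parallel,\bullet\}$'' would carry the valid law $\cont\land(\cont'\lor\cont'')=(\cont\land\cont')\lor(\cont\land\cont'')$ to a distributivity of $\parallel$ over $\bullet$, which the table itself records as \emph{false}.

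The consequence is that duality alone only cuts the eight positive off-diagonal entries to four, and those four must each be verified by hand. This is exactly what the paper does: it computes $\land$ over $\lor$, $\land$ over $\parallel$, $\parallel$ over $\land$, and $\parallel$ over $\lor$ directly from the closed forms (the last two needing the canonicity identity $a\lor g=1_B$ to simplify, as you anticipated), and then obtains the remaining four positive entries by applying $(\cdot)^{-1}$. So your plan is salvageable, but you need four direct Boolean computations rather than the one or two you were hoping for; drop the $\theta$-transfer step entirely.
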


%We can use these distributivity results to obtain a partial result for ordering operations.

If we use conjunction to combine viewpoints, then the distributivity of composition over conjunction says that \Question{combining viewpoints and composing subspecifications can be performed at will, in any order.} This is wrong if we use merging to combine viewpoints. Nevertheless, the following proposition expresses that, if we use merging to combine viewpoints, \Question{it is preferred to combine viewpoints first, and then compose subspecifications, than the converse.}

\begin{proposition}\label{nkkbhgfb}
Let $\cont_i$ be a contract for $1 \le i \le 4$. We have the following semi-distributions:
\begin{align*}
(C_1 \land C_2) &\star (C_3 \land C_4) \le (C_1 \star C_3) \land (C_2 \star C_4)
\quad\text{and} \\
(C_1 \lor C_2) & \star (C_3 \lor C_4) \ge (C_1 \star C_3) \lor (C_2 \star C_4),
\end{align*}
where $\star$ is any of the operations conjunction, disjunction, composition, or merging.
\end{proposition}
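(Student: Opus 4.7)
The plan is to derive both inequalities as immediate consequences of the fact that each of the four operations $\star \in \{\land, \lor, \parallel, \bullet\}$ is monotonic with respect to the refinement order. Monotonicity of $\land$ and $\lor$ is immediate from the definition of meet and join. Monotonicity of $\parallel$ is exactly Proposition~\ref{{\documentPath}gbq78xrb6}. Monotonicity of $\bullet$ is not explicitly recorded in the excerpt, but it is a straightforward check from the closed form in Table~\ref{{\documentPath}kbdlsk}: if $\cont_1 \le \cont_1'$, i.e., $a_1' \le a_1$ and $g_1 \le g_1'$ in $\balg$, then $a_1 \land a_2 \ge a_1' \land a_2$ and $g_1 \land g_2 \lor \neg(a_1 \land a_2) \le g_1' \land g_2 \lor \neg(a_1' \land a_2)$, so $\cont_1 \bullet \cont_2 \le \cont_1' \bullet \cont_2$, and symmetrically in the second argument.

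For the first semi-distribution, I would start from the defining inequalities of the meet: $C_1 \land C_2 \le C_1$, $C_1 \land C_2 \le C_2$, $C_3 \land C_4 \le C_3$, and $C_3 \land C_4 \le C_4$. Applying monotonicity of $\star$ in both arguments gives $(C_1 \land C_2) \star (C_3 \land C_4) \le C_1 \star C_3$ and, by the analogous pairing, $(C_1 \land C_2) \star (C_3 \land C_4) \le C_2 \star C_4$. Since the left-hand side lies below both $C_1 \star C_3$ and $C_2 \star C_4$, it lies below their meet $(C_1 \star C_3) \land (C_2 \star C_4)$, which is the desired inequality.

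The second semi-distribution is the order-dual argument. From the defining inequalities of the join we have $C_1, C_2 \le C_1 \lor C_2$ and $C_3, C_4 \le C_3 \lor C_4$. Monotonicity of $\star$ then gives $C_1 \star C_3 \le (C_1 \lor C_2) \star (C_3 \lor C_4)$ and $C_2 \star C_4 \le (C_1 \lor C_2) \star (C_3 \lor C_4)$. Since $(C_1 \lor C_2) \star (C_3 \lor C_4)$ is a common upper bound of $C_1 \star C_3$ and $C_2 \star C_4$, it dominates their join, yielding $(C_1 \star C_3) \lor (C_2 \star C_4) \le (C_1 \lor C_2) \star (C_3 \lor C_4)$.

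There is no substantive obstacle: the whole argument reduces to ``monotonicity of $\star$ in each argument,'' invoked four times. The only subtlety worth flagging is that the proposition asserts the inequality uniformly over all four choices of $\star$, so one must make sure the monotonicity claim is discharged for every case; the cases $\land, \lor, \parallel$ are already available in the paper, while the case $\bullet$ is the one short verification I would include explicitly, via its closed form.
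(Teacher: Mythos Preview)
Your argument is correct. It differs from the paper's proof, which instead invokes the full distributivity results of Proposition~\ref{qocvksf}: expanding $(C_1 \land C_2) \star (C_3 \land C_4)$ into the four-term conjunction $(C_1 \star C_3) \land (C_1 \star C_4) \land (C_2 \star C_3) \land (C_2 \star C_4)$ and then dropping two terms, and dually for the join case. Your route via monotonicity of $\star$ in each argument is more elementary in that it uses a strictly weaker hypothesis than distributivity (indeed, monotonicity follows from distributivity over $\land$ but not conversely), and it avoids having to appeal to the case analysis in Table~\ref{{\documentPath}kcsdy}. The paper's approach, on the other hand, makes the inequality visibly an instance of ``drop conjuncts / add disjuncts'' after an exact expansion, which ties the result more directly to the semiring structure developed immediately afterward. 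Your explicit verification of monotonicity for $\bullet$ is fine; one could also obtain it in one line from the duality $\cont_1 \bullet \cont_2 = (\cont_1^{-1} \parallel \cont_2^{-1})^{-1}$ together with the order-reversal of $(\cdot)^{-1}$ and Proposition~\ref{{\documentPath}gbq78xrb6}.
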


The sub-distributivity of parallel composition over conjunction is proved in Chapter 4 of \cite{BenvenisteContractBook}.
%Proposition~\ref{nkkbhgfb} suggests methodologies for ordering operations. If we are using conjunction to combine viewpoints, conjoining contracts first and then applying the second operation yields a more refined contract than doing the opposite. Conversely, if we are using disjunction to combine product lines, taking disjunction first and then applying the second operation yields a looser contract than that obtained when ordering the operations in the opposite way.
%One can find a notion of sub-distributivity for contracts in~\cite{BenvenisteContractBook}.
We will now use the distributivity results to look for semiring structure within the algebra of contracts. We recall the definition of a semiring (see, e.g.,~\cite{semiringsGolan}):

\begin{definition}
    A semiring $(R,\cdot, + , 1_R, 0_R)$ is a nonempty set $R$ where
    %\begin{enumerate}
    (a) $(R, +, \allowbreak 0_R)$ is a commutative monoid;
    (b) $(R, \cdot, 1_R)$ is a monoid;
    (c) $r(s + t) = rs + rt$ and $(s+t)r = sr + tr$ for all $r, s, t \in R$;
    (d) $r \cdot 0_R = 0_R \cdot r = 0_R$ for all $r \in R$; and
    (e) $0_R \ne 1_R$.
    %\end{enumerate}
    
    A map of semirings $f: (R,\cdot, + , 1_R, 0_R) \to (R',\cdot, + , 1_{R'}, 0_{R'})$ satisfies
    %\begin{enumerate}
    (a) $f(0_R) = f(0_{R'})$,
    (b) $f(1_R) = f(1_{R'})$,
    (c) $f(r + s) = f(r) + f(s)$, and
    (d) $f(r \cdot s) = f(r) \cdot f(s)$.
    %\end{enumerate}
\end{definition}

The following result provides the semiring structures available in the contract algebra.

\begin{proposition}\label{vfglgqbwf}
    Using the operations of conjunction, disjunction, composition, and merging, we have exactly four semirings:
    %\begin{itemize}
    (a) the conjunction semiring $\andsr(\balg) = (\calg(\balg), \land, \lor, 1, 0)$,
    (b) the disjunction semiring $\orsr(\balg) = (\calg(\balg), \lor, \land, \allowbreak 0, 1)$,
    (c) the composition semiring $\parsr(\balg) = (\calg(\balg), \parallel, \lor, \id, 0)$, and
    (d) the merging semiring $\mersr(\balg) = (\calg(\balg), \bullet, \land, \id, 1)$.
    %\end{itemize}
\end{proposition}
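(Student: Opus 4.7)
The plan is to treat this as an enumeration over the $4\times 4$ possible choices of multiplication and addition among the four operations, using Propositions~\ref{kahfllnk} and~\ref{qocvksf} to cut down the search space, and then verifying the remaining semiring axioms directly from Tables~\ref{{\documentPath}kcsdy} and~\ref{{\documentPath}nyfclfr}. By Proposition~\ref{kahfllnk}, every one of the four operations already forms a commutative idempotent monoid, so the monoid axioms of the semiring are automatic regardless of which operation we pick for $\cdot$ and which for $+$. What remains to check, for each candidate pair $(\cdot,+)$ with identities $1_R$ and $0_R$ respectively, is (i) two-sided distributivity of $\cdot$ over $+$, (ii) the absorption law $r\cdot 0_R = 0_R \cdot r = 0_R$, and (iii) $0_R\ne 1_R$.

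First I would establish the existence part by checking the four claimed semirings. For $\andsr(\balg)$ one has $1_R=1$, $0_R=0$; distributivity of $\land$ over $\lor$ is the corresponding entry of Table~\ref{{\documentPath}kcsdy}, absorption $\cont\land 0 = 0$ is recorded in Table~\ref{{\documentPath}nyfclfr}, and $0\ne 1$ since $(1_B,0_B)\ne(0_B,1_B)$. The three remaining cases $\orsr(\balg)$, $\parsr(\balg)$, and $\mersr(\balg)$ go through identically, each time reading off one entry from each table: the relevant absorption identities are $\cont\lor 1=1$, $\cont\parallel 0=0$, and $\cont\bullet 1=1$; commutativity of $+$ gives two-sided distributivity from the one-sided statement.

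For the uniqueness part the strategy is to rule out every other combination. Since commutativity of $+$ is free, the only candidates for $+$ are the four operations themselves. Table~\ref{{\documentPath}kcsdy} eliminates the four non-distributive pairs: $\land$ does not distribute over $\bullet$, $\lor$ does not distribute over $\parallel$, $\parallel$ does not distribute over $\bullet$, and $\bullet$ does not distribute over $\parallel$. This leaves twelve distributively compatible pairs. Among these, the absorption axiom picks out exactly one valid additive identity for each multiplicative operation. For instance, when $\cdot=\land$ (so $1_R=1$), the candidate $0_R\in\{1,0,\id\}$ must satisfy $\cont\land 0_R = 0_R$ for all $\cont$; Table~\ref{{\documentPath}nyfclfr} shows this fails for $0_R=1$ (since $\cont\land 1=\cont\ne 1$ in general) and for $0_R=\id$ (since $(a,g)\land\id=(1_B,g)\ne\id$ in general), leaving only $0_R=0$. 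The same three-case computation, read from the relevant rows of Table~\ref{{\documentPath}nyfclfr}, disposes of the other multiplicative choices: $\cdot=\lor$ forces $0_R=1$; $\cdot=\parallel$ forces $0_R=0$ (since $\cont\parallel 1=(\neg g,g)\ne 1$ and $\cont\parallel\id=\cont\ne\id$); $\cdot=\bullet$ forces $0_R=1$ (since $(a,g)\bullet 0=(a,\neg a)\ne 0$ and $\cont\bullet\id=\cont\ne\id$).

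Once each $\cdot$ is paired with its unique forced $0_R$, one verifies $0_R\ne 1_R$ in all four surviving cases ($0\ne 1$, $1\ne 0$, $0\ne\id$, $1\ne\id$), completing the enumeration. The main obstacle is really just the bookkeeping: making sure the twelve distributively compatible pairs are treated uniformly and that the absorption failures are justified by a concrete contract rather than a vague appeal to ``in general''. This is handled cleanly by quoting specific identities from Table~\ref{{\documentPath}nyfclfr} (e.g.\ $(a,g)\parallel 1=(\neg g,g)$), each of which yields a value distinct from the purported $0_R$ for any non-trivial $\balg$.
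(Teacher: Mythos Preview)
Your proposal is correct and follows essentially the same approach as the paper: fix each of the four operations as multiplication, then use the distributivity table (Proposition~\ref{qocvksf}) and the absorption identities (Table~\ref{{\documentPath}nyfclfr}) to rule out all but one candidate for addition. Your enumeration is somewhat more explicit and systematic than the paper's (you tally the twelve distributively compatible pairs up front and then dispatch them uniformly via absorption), but the underlying mechanism is identical.
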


These four semirings have two isomorphisms.
\begin{proposition}\label{nhwlfchcjbg}
We have the isomorphisms $\andsr(\balg) \cong \orsr(\balg)$ and $\parsr(\balg) \cong \mersr(\balg)$. There are no isomorphisms between these two pairs.
\end{proposition}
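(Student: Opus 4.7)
The plan is to establish the two positive isomorphisms using the reciprocal operation, and then rule out the remaining possibilities by exhibiting a semiring invariant that separates the ``conjunction family'' from the ``composition family''.

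For $\andsr(\balg)\cong\orsr(\balg)$ I would verify that $(\cdot)^{-1}$ is a semiring isomorphism. It is a bijection because the reciprocal is an involution. The duality of conjunction and disjunction gives $(\cont\land\cont')^{-1}=\cont^{-1}\lor\cont'^{-1}$ and $(\cont\lor\cont')^{-1}=\cont^{-1}\land\cont'^{-1}$, so the reciprocal swaps the multiplication $\land$ of $\andsr(\balg)$ with the multiplication $\lor$ of $\orsr(\balg)$ and likewise swaps their additions. On the distinguished elements, $1^{-1}=0$ and $0^{-1}=1$, which matches the fact that the roles of the multiplicative and additive identities are exchanged between the two semirings. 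An identical computation, using the duality of composition and merging together with $\id^{-1}=\id$, shows that $(\cdot)^{-1}$ is a semiring isomorphism $\parsr(\balg)\to\mersr(\balg)$.

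To rule out any isomorphism between the two pairs, it suffices to show $\andsr(\balg)\not\cong\parsr(\balg)$, since the remaining three cases then follow by composing with the isomorphisms just produced. The invariant I would use is that \emph{in $\andsr(\balg)$ the multiplicative identity is additively absorbing}: for every $\cont\in\calg(\balg)$ we have $\cont\lor 1=1$, because $1=(0_B,1_B)$ is the top of the refinement lattice (recorded in Table~\ref{{\documentPath}nyfclfr}). Suppose, toward a contradiction, that $f\colon \andsr(\balg)\to\parsr(\balg)$ is a semiring isomorphism. Then $f$ must send the multiplicative identity of $\andsr(\balg)$ to that of $\parsr(\balg)$, so $f(1)=\id$. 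Applying $f$ to $\cont\lor 1=1$ and using preservation of addition yields $f(\cont)\lor \id=\id$ for every $\cont$; by surjectivity of $f$, this says $\cont'\lor\id=\id$ for every $\cont'\in\calg(\balg)$. Taking $\cont'=1$ gives $1\lor\id=\id$, i.e.\ $1=\id$, which contradicts $0_B\ne 1_B$, an inequality forced by the semiring axiom $0\ne 1$ in $\calg(\balg)$.

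The main obstacle is choosing the right invariant in the last step: monoid-level invariants do not suffice, because Proposition~\ref{nkjhlbskf} already shows the four underlying multiplicative monoids are isomorphic. One must therefore use a property that genuinely involves the interplay between the two semiring operations, and the asymmetric ``the multiplicative identity absorbs the additive operation'' property is a clean choice that holds in $\andsr(\balg)$ and $\orsr(\balg)$ but fails in $\parsr(\balg)$ and $\mersr(\balg)$.
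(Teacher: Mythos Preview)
Your proof is correct and follows essentially the same route as the paper: the reciprocal furnishes both positive isomorphisms, and the negative result hinges on the observation that in $\andsr(\balg)$ the multiplicative identity $1$ is additively absorbing whereas in $\parsr(\balg)$ the multiplicative identity $\id$ is not. The only cosmetic difference is direction: the paper considers a putative semiring map $\beta\colon\parsr(\balg)\to\andsr(\balg)$ and shows it collapses all contracts $(a,1_B)$ to $1$ (hence is never injective), while you consider $f\colon\andsr(\balg)\to\parsr(\balg)$ and derive $1=\id$; the underlying invariant is identical.
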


\subsection{Some semiring maps}

Let $\boolandsr(\balg)$ and $\boolorsr(\balg)$ be, respectively, the semirings $(\balg, \land, \lor, 1, 0)$ and $(\balg, \lor, \land, \allowbreak 0, 1)$. We first observe that complementation is a semiring isomorphism for $\boolandsr(\balg)$ and $\boolorsr(\balg)$.
We define maps $\Delta_g : \boolandsr(\balg) \to \andsr(\balg)$ and $\iota_g : \boolandsr(\balg) \to \parsr(\balg)$ as follows:
$
    \Delta_g (b) = (\neg b, b)$ and
$
    \iota_g (b) = (1_B, b)
$.
\begin{proposition}\label{khjbcgfkbgfk}
    $\Delta_g$ and $\iota_g$ are semiring homomorphisms. %Moreover, they are the only semiring homomorphisms from $\boolandsr(\balg)$ to $\andsr(\balg)$ and from $\boolandsr(\balg)$ to $\parsr(\balg)$, respectively.
\end{proposition}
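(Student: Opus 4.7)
The plan is to verify the four semiring-homomorphism axioms directly for each of the maps, using the closed-form expressions in Table~\ref{{\documentPath}kbdlsk}. Since both source semirings are $\boolandsr(\balg) = (\balg, \land, \lor, 1_B, 0_B)$, I need, in each case, to check (a) preservation of $0_B$, (b) preservation of $1_B$, (c) preservation of $\lor$ (additive), and (d) preservation of $\land$ (multiplicative). None of the verifications requires any cleverness beyond substituting into the formulas for contract $\land$, $\lor$, and $\parallel$.

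For $\Delta_g \colon \boolandsr(\balg) \to \andsr(\balg)$, I first note $\Delta_g(0_B) = (\neg 0_B, 0_B) = (1_B, 0_B) = 0$ and $\Delta_g(1_B) = (0_B, 1_B) = 1$, taking care of (a) and (b). For (c) I compute
\[
\Delta_g(a) \lor \Delta_g(b) = (\neg a, a) \lor (\neg b, b) = (\neg a \land \neg b, a \lor b) = (\neg(a \lor b), a \lor b) = \Delta_g(a \lor b),
\]
using the disjunction formula from Table~\ref{{\documentPath}kbdlsk} and De Morgan. For (d), similarly,
\[
\Delta_g(a) \land \Delta_g(b) = (\neg a \lor \neg b, a \land b) = \Delta_g(a \land b).
\]

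For $\iota_g \colon \boolandsr(\balg) \to \parsr(\balg)$, the additive identity is $0 = (1_B, 0_B)$ and the multiplicative identity is $\id = (1_B, 1_B)$, so (a) $\iota_g(0_B) = (1_B, 0_B) = 0$ and (b) $\iota_g(1_B) = (1_B, 1_B) = \id$ are immediate. Additivity (c) follows since $(1_B, a) \lor (1_B, b) = (1_B \land 1_B, a \lor b) = (1_B, a \lor b)$. The multiplicative check (d) is the only one using the composition formula: applying $\parallel$ to $(1_B, a)$ and $(1_B, b)$ gives assumptions $1_B \land 1_B \lor \neg(a \land b) = 1_B$ and guarantees $a \land b$, yielding exactly $\iota_g(a \land b)$.

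I do not foresee a genuine obstacle; the only minor thing to watch for is remembering the correct assignment of additive and multiplicative identities in the target semirings ($0$ and $1$ in $\andsr$, but $0$ and $\id$ in $\parsr$), which is where an off-by-one slip could occur. Beyond that, the proof is a mechanical substitution, and each case can be written in a single line using the formulas already tabulated, so the entire argument is a short paragraph of routine algebra.
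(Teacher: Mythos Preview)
Your proposal is correct and follows essentially the same approach as the paper: both verify the four semiring-homomorphism axioms for $\Delta_g$ and $\iota_g$ by direct substitution into the closed-form formulas for contract $\land$, $\lor$, and $\parallel$, with the same one-line checks for each axiom.
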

Observe that $\Delta_g$ can be used to obtain a semiring map from $\boolorsr(\balg)$ to $\orsr(\balg)$ using the semiring isomorphisms
$\neg: \boolandsr(\balg) \xrightarrow{\sim} \boolorsr(\balg)$ and
$(\cdot)^{-1}: \andsr(\balg) \xrightarrow{\sim} \orsr(\balg)$ as follows:
\begin{equation*}
    \begin{tikzcd}[row sep=tiny]
        \boolorsr(\balg) \arrow[r, "\neg"] & \boolandsr(\balg) \arrow[r, "\Delta_g"] & \andsr(\balg) \arrow[r, "(\cdot)^{-1}"] & \orsr(\balg) \\
        b \arrow[r, mapsto] & \neg b \arrow[r, mapsto] & (b, \neg b) \arrow[r, mapsto] & (\neg b, b)
    \end{tikzcd}
\end{equation*}
This means that $\Delta_g$ is also a semiring homomorphism $\begin{tikzcd}
    \boolorsr(\balg) \arrow[r, "\Delta_g"] & \orsr(\balg).
\end{tikzcd}$ The following diagram commutes in the category of semirings:
\begin{equation*}
    \begin{tikzcd}
        \boolandsr(\balg) \arrow[d, "\Delta_g"'] \arrow[r, "\neg", "\simeq"', leftrightarrow] 
        \arrow[dr, dashed]
        &
        \boolorsr(\balg)
        \arrow[d, "\Delta_g"]
        \arrow[dl, "\Delta_a" description, dashed]
        \\
        \andsr(\balg) \arrow[r, "(\cdot)^{-1}"', "\simeq", leftrightarrow] & \orsr(\balg)
    \end{tikzcd}
\end{equation*}
The commutativity of the diagram gives rise to the diagonal arrow $\Delta_a = (\cdot)^{-1} \circ \Delta_g = \Delta_g \circ \neg$. This map is a semiring homomorphism from $\boolandsr(\balg)$ to $\orsr(\balg)$ and from $\boolorsr(\balg)$ to $\andsr(\balg)$. Explicitly, this map is
\[
    \Delta_a b = (\Delta_g (b))^{-1} = (\neg b, b)^{-1} = (b, \neg b) \quad (b \in \balg).
\]
If we use the map $\iota_g$, we can obtain $\iota_a': \boolorsr(\balg) \to \mersr(\balg)$ as follows:
\begin{equation*}
    \begin{tikzcd}[row sep=tiny]
        \boolorsr(\balg) \arrow[r, "\neg"] & \boolandsr(\balg) \arrow[r, "\iota_g"] & \parsr(\balg) \arrow[r, "(\cdot)^{-1}"] & \orsr(\balg) \\
        b \arrow[r, mapsto] & \neg b \arrow[r, mapsto] & (1_B, \neg b) \arrow[r, mapsto] & (\neg b, 1_B)
    \end{tikzcd}
\end{equation*}
We obtain the diagram below.
\begin{equation*}
    \begin{tikzcd}[row sep=large]
        \boolandsr(\balg) \arrow[d, "\iota_g"'] \arrow[r, "\neg", "\simeq"', leftrightarrow] 
        \arrow[dr, "\iota_a"', pos=0.1,dashed]
        &
        \boolorsr(\balg)
        \arrow[d, "\iota_a'"]
        \arrow[dl, "\iota_g'", pos=0.1, dashed]
        \\
        \parsr(\balg) \arrow[r, "(\cdot)^{-1}"', "\simeq", leftrightarrow] & \mersr(\balg)
    \end{tikzcd}
\end{equation*}
The commutativity of the diagram provides the semiring maps
$\begin{tikzcd}[column sep=small]
    \boolandsr(\balg) \arrow[r, "\iota_a"] & \mersr(\balg)
\end{tikzcd}
$
and
$
\begin{tikzcd}[column sep=small]
    \boolorsr(\balg) \arrow[r, "\iota_g'"] & \parsr(\balg)
\end{tikzcd}$
given by
$\iota_a = (\cdot)^{-1} \circ \iota_g$ and $\iota_g' = \iota_g \circ \neg$.

Now we consider maps to $\boolandsr(\balg)$. The map
$\pi_g: (a, g) \mapsto g$
is a semiring homomorphism from $\andsr(\balg)$ to $\boolandsr(\balg)$ and from $\parsr(\balg)$ to $\boolandsr(\balg)$. Similarly, the map 
$\pi_a': (a, g) \mapsto \neg a$
is a semiring homomorphism from $\andsr(\balg)$ to $\boolandsr(\balg)$. The following diagrams commute and define the maps not specified before.
\begin{equation*}
    \begin{tikzcd}[row sep=large]
        \boolandsr(\balg) \arrow[d, "\pi_g"', leftarrow] \arrow[r, "\neg", "\simeq"', leftrightarrow] 
        \arrow[dr, "\pi_a", leftarrow, pos=0.90,dashed]
        &
        \boolorsr(\balg)
        \arrow[d, "\pi_a'", leftarrow]
        \arrow[dl, "\pi_g'"', pos=0.90, dashed, leftarrow]
        \\
        \parsr(\balg) \arrow[r, "(\cdot)^{-1}"', "\simeq", leftrightarrow] & \mersr(\balg)
    \end{tikzcd}
%\end{equation*}
%\begin{equation*}
    \quad
    \begin{tikzcd}[row sep=large]
        \boolandsr(\balg) \arrow[d, "\pi_g"', leftarrow] \arrow[r, "\neg", "\simeq"', leftrightarrow] 
        \arrow[dr, "\pi_a", leftarrow, pos=0.90,dashed]
        &
        \boolorsr(\balg)
        \arrow[d, "\pi_a'", leftarrow]
        \arrow[dl, "\pi_g'"', pos=0.90, dashed, leftarrow]
        \\
        \andsr(\balg) \arrow[r, "(\cdot)^{-1}"', "\simeq", leftrightarrow] & \orsr(\balg)
    \end{tikzcd}
%\end{equation*}
%\begin{equation*}
    \quad
    \begin{tikzcd}[row sep=large]
        \boolandsr(\balg) \arrow[d, "\pi_a'"', leftarrow] \arrow[r, "\neg", "\simeq"', leftrightarrow] 
        \arrow[dr, "\pi_g'", leftarrow, pos=0.90,dashed]
        &
        \boolorsr(\balg)
        \arrow[d, "\pi_g", leftarrow]
        \arrow[dl, "\pi_a"', pos=0.90, dashed, leftarrow]
        \\
        \andsr(\balg) \arrow[r, "(\cdot)^{-1}"', "\simeq", leftrightarrow] & \orsr(\balg)
    \end{tikzcd}
\end{equation*}
\section{Actions}
\label{sc:actions}

One of the questions we sought to answer was: how can we compute contract abstractions? Abstractions are useful to carry out refinement verification in less costly computational environments. We will define abstractions through contract actions.

The semiring maps just described can be used to generate actions of the semirings $\boolandsr(\balg)$ and $\boolorsr(\balg)$ over the contract semirings. Consider, for example the map $\Delta_g : \boolandsr(\balg) \to \andsr(\balg)$. For a contract $\cont = (a,g)$, we have
$
    \Delta_g(b) \land \cont = (\neg b, b) \land (a, g) = (b \to a, b \land g)
$.

Now consider the map $\iota_g : \boolandsr(\balg) \to \parsr(\balg)$:
$
    \iota_g(b) \parallel \cont = (1_B, b) \parallel (a, g) = ((b \land g) \to a, b \land g) = (b \to a, b \land g)
$.
We observe that $\boolandsr(\balg)$ acts in the same way on the semirings $\andsr(\balg)$ and $\parsr(\balg)$. This leads us to
\begin{definition}
The right action of $\balg$ on $\calg(\balg)$ is
$
    (a, g) \cdot b = (b \to a, b \land g)
$.
\end{definition}

Similarly, the semiring map $\Delta_a : \boolandsr(\balg) \to \orsr(\balg)$ yields
$
    \Delta_a(b) \lor \cont = (b, \neg b) \lor (a, g) = (b \land a, b \to g)
$,
and the semiring map $\iota_a : \boolandsr(\balg) \to \mersr(\balg)$ results in
$
    \iota_a(b) \bullet \cont = (b, 1_B) \bullet (a, g) = (b \land a, b \to g)
$.
$\boolandsr(\balg)$ acts in the same way on the semirings $\orsr(\balg)$ and $\mersr(\balg)$.
We thus obtain
\begin{definition}
The left action of $\balg$ on $\calg(\balg)$ is given by
$
    b \cdot (a, g) = (b \land a, b \to g)
$.
\end{definition}

The left and right actions have the practical meaning of adding assumptions or guarantees, respectively, to a contract.
The following proposition shows several properties of these actions.

\setlength\tabcolsep{4pt}
\begin{table*}
    \centering
    {\scriptsize
    \begin{tabular}{l  l}
        \toprule
        Order & \\
        $b \cdot \cont \ge \cont$ &
        $\cont \cdot b \le \cont$ \\
        $\cont \le \cont' \Rightarrow b \cdot \cont \le b\cdot \cont'$ &
        $\cont \le \cont' \Rightarrow \cont \cdot b \le \cont' \cdot b$ \\
        \midrule
        Reciprocal & \\
        $(b \cdot \cont)^{-1} = \cont^{-1} \cdot b$ & \\
        \midrule
        Associativity & \\
        $(b \land b')\cdot \cont = b \cdot (b' \cdot \cont)$ &
        $\cont\cdot (b \land b') = (\cont \cdot b) \cdot b'$ \\
        \midrule
        Distributivity over the Boolean algebra &  \\
        $(b \lor b') \cdot \cont = (b \cdot \cont) \land (b' \cdot \cont)$ &
        $\cont \cdot (b \lor b') = (\cont \cdot b) \lor (\cont \cdot b')$
        \\
        \midrule
        Actions and the contract operations \\
        %conj
        $b\cdot (\cont \land \cont') = b\cdot \cont \land b\cdot \cont'$ &
        $(\cont \land \cont')\cdot b = \cont \cdot b \land \cont'$ \\
        %disj
        $b\cdot (\cont \lor \cont') = b \cdot \cont \lor \cont'$ &
        $(\cont \lor \cont')\cdot b = \cont \cdot b \lor \cont' \cdot b$ \\
        %comp
        $b\cdot (\cont \parallel \cont') = b \cdot \cont \parallel b\cdot \cont'$ &
        $(\cont \parallel \cont')\cdot b = \cont \cdot b \parallel \cont'$ \\
        %merging
        $b\cdot (\cont \bullet \cont') = b \cdot \cont \bullet \cont'$ &
        $(\cont \bullet \cont')\cdot b = (\cont \cdot b) \bullet (\cont' \cdot b)$ \\
        \midrule
        Actions and the adjoint operations \\
        %qout
        $b\cdot (\cont / \cont') = \cont / (\cont' \cdot b) = (b \cdot \cont) / \cont'$ &
        $(\cont / \cont')\cdot b = (\cont \cdot b) / (b \cdot \cont')$ \\
        %sep
        $b\cdot (\cont \div \cont') = (b \cdot \cont) \div (\cont' \cdot b)$ &
        $(\cont \div \cont')\cdot b = (\cont \cdot b) \div \cont' = \cont \div (b \cdot \cont')$ \\
        %impl
        $b\cdot (\cont' \to \cont) = \cont' \to b \cdot \cont = \cont' \cdot b \to \cont$ &
        $(\cont' \to \cont)\cdot b = b\cdot \cont' \to \cont \cdot b$ \\
        %coimpl
        $b\cdot (\cont' \nrightarrow \cont) = \cont' \cdot b \nrightarrow b \cdot \cont$ &
        $(\cont' \nrightarrow \cont)\cdot b = \cont'  \nrightarrow \cont \cdot b = b \cdot \cont' \nrightarrow \cont$
        \\
        \bottomrule
    \end{tabular}}
    \caption[Identities for the left and right actions of a Boolean algebra $\balg$ over its contract algebra]{Identities for the left and right actions of a Boolean algebra $\balg$ over its contract algebra $(b, b' \in \balg \text{ and } \cont, \cont' \in \calg(\balg))$}
    \label{kslflp}
    %\vspace{-8mm}
\end{table*}

\begin{proposition}\label{knxfghsqblr}
    The identities for the left and right actions shown in Table~\ref{kslflp} hold.
\end{proposition}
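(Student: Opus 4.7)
The plan is to verify each identity by unfolding definitions and computing in the underlying Boolean algebra. Once I rewrite $b \to x$ as $\neg b \lor x$, expand each closed-form expression from Tables~\ref{{\documentPath}kbdlsk} and apply the actions $b \cdot (a,g) = (b \land a, b \to g)$ and $(a,g) \cdot b = (b \to a, b \land g)$, every entry reduces to a routine Boolean identity; canonicity $a \lor g = 1_B$ is invoked wherever it is needed to absorb extra terms. The key organizational idea is to exploit the reciprocal identity $(b \cdot \cont)^{-1} = \cont^{-1} \cdot b$, which I would establish first as a one-line check: if $\cont = (a,g)$, then $(b \cdot \cont)^{-1} = (b \to g, b \land a) = (g,a) \cdot b = \cont^{-1} \cdot b$. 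Combined with the involutions and dualities of Proposition~\ref{nkjhlbskf} and the dualities among the eight contract operations, this lets me deduce the right-hand column of each block from the left-hand column (and the dual operation's identity from the original), essentially halving the work.

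Before touching the table proper, I would record the three Boolean lemmas that do almost all of the algebraic lifting: (i) $(b \land b') \to x = b \to (b' \to x)$ and $(b \lor b') \to x = (b \to x) \land (b' \to x)$; (ii) $b \land (x \lor y) = (b \land x) \lor (b \land y)$ and $b \to (x \land y) = (b \to x) \land (b \to y)$; and (iii) the absorption pattern $\neg b \lor (b \land x) = \neg b \lor x$. With these in hand, the Order, Associativity, and Distributivity-over-$\balg$ blocks are immediate: for example, $(b \lor b') \cdot \cont$ has assumption $(b \lor b') \land a = (b \land a) \lor (b' \land a)$ and guarantee $(b \lor b') \to g = (b \to g) \land (b' \to g)$, which is exactly the conjunction of $b \cdot \cont$ and $b' \cdot \cont$ by the conjunction formula.

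The block \emph{Actions and the contract operations} is handled by expanding $\cont \star \cont'$, applying the action, and comparing. The only subtlety appears in the disjunctive and merging cases, where the asymmetric form of the identity (e.g.\ $b \cdot (\cont \lor \cont') = b \cdot \cont \lor \cont'$) requires one to check that $b \cdot \cont \lor \cont'$ already absorbs the $b$-factor on the other side: using the disjunction formula and canonicity of $\cont'$, the assumption reduces to $(b \land a) \land a' = b \land a \land a'$, matching the left-hand side, and the guarantee uses $(b \to g) \lor g' = b \to (g \lor g')$ thanks to canonicity $a' \lor g' = 1_B$.

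The main obstacle is the final block on adjoints, because quotient, separation, implication, and coimplication have nested complements and meets in their closed forms. My technique there is to reduce everything to a normal form in the eight literals $b, \neg b, a, \neg a, g, \neg g, a', \neg a', g', \neg g'$, systematically applying $a \lor g = 1_B$ and $a' \lor g' = 1_B$ to merge absorbing terms. As a representative check, $\cont/\cont' = (a \land g', g \land a' \lor \neg(a \land g'))$, so $b \cdot (\cont/\cont')$ has guarantee $\neg b \lor (g \land a') \lor \neg(a \land g')$, while $\cont/(\cont' \cdot b) = (a \land b \land g', g \land (\neg b \lor a') \lor \neg(a \land b \land g'))$ has guarantee $(g \land \neg b) \lor (g \land a') \lor \neg a \lor \neg b \lor \neg g'$, which collapses to the same expression after absorbing $g \land \neg b$ into $\neg b$ and using $\neg a \lor \neg g' = \neg(a \land g')$. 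All remaining entries in this block follow the same template, and the right-column entries are obtained from the left-column ones by reciprocation together with the duality pairings of Section~\ref{sc:agcont}.
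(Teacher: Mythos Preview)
Your proposal is correct and follows essentially the same approach as the paper: direct Boolean computation together with the reciprocal identity $(b\cdot\cont)^{-1}=\cont^{-1}\cdot b$ to transfer left-action identities to right-action ones and to pass between dual operations. The one shortcut you miss is that for quotient and separation the paper avoids your brute-force normal-form expansion by invoking the identities $\cont/\cont'=\cont\bullet(\cont')^{-1}$ and $\cont\div\cont'=\cont\parallel(\cont')^{-1}$ from Section~\ref{sc:agcont}, reducing those cases immediately to the already-verified merging and composition rows; for implication the paper computes directly as you do. (A minor remark: your disjunction check $(b\to g)\lor g' = b\to(g\lor g')$ holds on the nose without appealing to canonicity of $\cont'$.)
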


%{\color{red}Comments Albert: I am still unhappy with this. Yes, contract abstraction involves the word ``abstraction''. But it does not relate to the word abstraction as used in verif tools. The latter aims to map domains to abstract concrete domains (typically finite) to make everything computable. If you want to list the left-action abstraction, you should then explain why and when it allows reduce computational cost. Otherwise, I find this paragraph confusing.
%}

\Question{The left action on contracts can be used to generate an abstraction for the guarantees of a contract}. Given a contract $\cont$ and $b \in B$, we know from Table~\ref{kslflp} that $\cont \le b \cdot \cont$. This tells us that we can obtain a more relaxed contract by adding assumptions.
Subsequently, we may make use of this additional assumption to coarsen the guarantees of the contract.
This operation is rich in algebraic properties, as shown in Table~\ref{kslflp}.
For algorithmic manipulations of contracts, see Chapter 7 of \cite{Incer:EECS-2022-99}.

\section{Contract abstractions}
\label{sc:contabs}

It is often useful to vary the level of detail used to represent objects. More detail may be needed to carry out some analysis tasks, but too much detail may hinder computation. We will explore \Question{Question~2: how can we compute contract abstractions?}

%\newpage
%\subsection{Using the left-action in abstractions}

\begin{comment}
\subsection{The closure-interior abstraction}

Suppose we have two monotone operators $\rho, \sigma \colon B \to B$. $\rho$ is a closure operator, meaning that it is idempotent and that $b \le \rho b$ for all $b \in B$, and $\sigma$ is an interior operator, meaning that it is idempotent and that $\sigma b \le b$ for all $b \in B$.
We obtain the abstraction map $\alpha\colon (a,g) \mapsto (\sigma a, \neg (\sigma a) \lor \rho g)$.
\end{comment}

\subsection{The Galois-connection abstraction}
Suppose we have a monotone operator $\alpha\colon \balg_c \to \balg_a$, where $B_a$ and $B_c$ are Boolean algebras called abstract and concrete domains, respectively. Suppose that $\alpha$ is also a monoid map $\alpha \colon \boolormon(B_c) \to \boolormon(B_a)$, i.e., it commutes with disjunction and maps $1_{B_c}$ to $1_{B_a}$.
We can define a map $\bar \alpha \colon \calg(B_c) \to \calg(B_a)$ as
\begin{equation}\label{qlkjvbhqk}\bar \alpha\colon (a, g) \mapsto (\alpha (a), \alpha (g)).\end{equation} The map is well-defined since $\alpha (a) \lor \alpha (g) = \alpha(a \lor g) = \alpha(1_{\balg_c}) = 1_{\balg_a}$. Moreover, the monotonicity of $\bar \alpha$ follows from the monotonicity of $\alpha$. This abstraction is a slight generalization of that proposed in Chapter 5 of \cite{BenvenisteContractBook}. 

How can such a map $\alpha$ be obtained?
Suppose that $\alpha$ is a monotone map that maps $1_{B_c}$ to $1_{B_a}$.
If $\alpha$ is a left element of a Galois connection pair, then it commutes with disjunction (because left adjoints commute with colimits). Thus, such an $\alpha$ generates a valid contract abstraction \eqref{qlkjvbhqk}.

\subsection{Contract Galois connections from Boolean algebra maps}

Let $f: \balg \to \balg'$ be a Boolean algebra map. $f$ induces a map $f^*: \calg(\balg) \to \calg(\balg')$ between their contract algebras given by
$
    f^*(a, g) = (f(a), f(g))
$.
Observe that $f(a) \lor f(g) = f(a \lor g) = f(1_{\balg}) = 1_{\balg'}$. As $f$ commutes with the Boolean algebra operations, $f^*$ commutes with the contract operations. Thus, $f^*$ is well-defined.

Let $\gamma: B_a \to B_c$ and $\alpha: B_c \to B_a$ be Boolean algebra maps. These maps generate contract maps
$
    \begin{tikzcd}
        \calg(B_a) \arrow[r, "\gamma^*"] & \calg(B_c)
    \end{tikzcd}
$
and
$
    \begin{tikzcd}
        \calg(B_c) \arrow[r, "\alpha^*"] & \calg(B_a),
    \end{tikzcd}
$
as described before.

We are interested in exploring the conditions needed for these maps to form a Galois connection. Specifically, for $\cont_a = (a_a, g_a) \in \calg(B_a)$ and $\cont_c = (a_c, g_c) \in \calg(B_c)$, we want
$
    \alpha^*(\cont_c) \le \cont_a \text{ if and only if }
    \cont_c \le \gamma^*(\cont_a)
$.

This means that
$
    (\alpha a_c, \alpha g_c) \le (a_a, g_a) \text{ if and only if }
    (a_c, g_c) \le (\gamma a_a, \gamma g_a).
$
If we set $a_c = 1_{B_c}$ and $a_a = 1_{B_a}$, we get
$
    \alpha g_c \le g_a \text{ if and only if }
    g_c \le \gamma g_a
$,
and if we set $g_c = 1_{B_c}$ and $g_a = 1_{B_a}$, we obtain
$
    a_a \le \alpha a_c \text{ if and only if }
    \gamma a_a \le a_c
$.

This means that $\alpha$ and $\gamma$ must be simultaneously the left and right adjoints of each other. By setting $g_c = \gamma g_a$ and $a_c = \gamma a_a$ in the equations above, we obtain that $\alpha \circ \gamma$ is the identity map. Similarly, by setting $g_a = \alpha g_c$ and $a_a = \alpha a_c$, we get that $\gamma \circ \alpha$ is the identity map. This means that $B_a$ and $B_c$ are isomorphic. We conclude that contract Galois connections generated from Boolean algebra maps impose very rigid constraints on the Boolean algebras over which the contracts are defined.

\section{Conclusions}
\vspace{-2mm}

Assume-guarantee contracts 
%have proven to be useful 
provide effective tools 
in system engineering design. In this paper, we described the algebra of contracts that can provide a framework for manipulating contracts in a structured way. 

We explored in-depth this algebraic structure \textit{per se}, while establishing a mathematically sound methodology for contract-based system design. The rich algebraic structure of assume-guarantee contracts provides sound support for a number of operations in system design: combining viewpoints, composing sub-specifications, and patching a design. At the same time, it raises a number of open methodological issues: does a designer have freedom in ordering the different design steps? Alternatively, does the theory impose or recommend an ordering? Abstractions are a well established way to provide semi-decision procedures at a lower computational cost. In this paper, we answered the following question: is there a systematic way to lift abstractions of properties to abstractions of contracts? 

While writing assume-guarantee contracts is intuitive in applications, the computation of the algebraic operations is often not.
Pacti\footnote{\url{https://www.pacti.org}} was recently introduced \cite{incer2023pacti}
to support the design tasks that are backed by the algebra of contracts. Pacti is able to automatically compute several of the operations we discussed.

An important question remains open: is there a sound way to lift testing from properties---where this is well-developed---to assume-guarantee contracts? Our algebra of assume-guarantee contracts does not provide answers to this question as yet.

%Question 3 you may address: testing lifted to contract observers. Say it remains open (unless you have better to say).

%\bibliographystyle{IEEEtran}
\bibliographystyle{style/lncs/splncs04}
\bibliography{support/definitions,support/references}

\appendix

\section{Proofs}
\begin{proof}[Proof of Proposition \ref{bkhlx}]
    Let $\cont_i$ be the contract stated in the proposition.
    Observe that $\cont_i \land \cont' = \cont \land \cont'$. Thus, by the monotonicity of conjunction, $\cont'' \le \cont_i$ implies that $\cont'' \land \cont' \le \cont$.

    Now write $\cont''$ as $\cont'' = (A'', G'')$ and assume that $\cont'' \land \cont' \le \cont$. Then
    \begin{align}
        \nonumber
        &G'' \setint G' \le G \text{ and} \\
        \nonumber
        &A'' \setunion A' \ge A.
        \intertext{From this we conclude that}
        \label{{\documentPath}lkjxnahkl}&G'' \le G \setunion \neg G' \text{ and} \\
        \label{{\documentPath}kjaha}&A'' \ge A \setint \neg A'.
        \intertext{From~\eqref{{\documentPath}lkjxnahkl} and the fact that $A'' \ge \neg G''$ (which follows from the definition of AG contracts), we obtain $A'' \ge G' \setint \neg G$. This result and~\eqref{{\documentPath}kjaha} yield}
        \nonumber
        &A'' \ge (A \setint \neg A') \setunion (G' \setint \neg G).
    \end{align}
    This expression and~\eqref{{\documentPath}lkjxnahkl} mean that $\cont'' \le \cont_i$, completing the proof.
\end{proof}

\begin{proof}[Proof of Proposition \ref{kahfllnk}]
We already know that $1$ and $0$ are, respectively, the identity elements of conjunction and disjunction. $\id$ is the identity for composition and merging. The idempotence of these operations follows immediately from their definitions. It remains to show that these operations are associative.

Let $\cont = (a, g)$, $\cont' = (a', g')$, and $\cont'' = (a'', g'')$ be contracts.
\begin{itemize}
    \item Conjunction.
    \begin{align*}
        \cont & \land (\cont' \land \cont'') =
        (a, g) \land (a' \lor a'', g' \land g'') \\ &=
        \left( a \lor (a' \lor a''), g \land (g' \land g'') \right) \\ &=
        \left( (a \lor a') \lor a'', (g \land g') \land g'' \right) \\ &=
        \left( a \lor a', g \land g' \right) \land \cont'' = 
        (\cont \land \cont') \land \cont''
    \end{align*}
    \item Disjunction.
    \begin{align*}
        \cont &\lor (\cont' \lor \cont'') =
        \left( \cont^{-1} \land ((\cont')^{-1} \land (\cont'')^{-1}) \right)^{-1}  \\ &=
        \left( (\cont^{-1} \land (\cont')^{-1}) \land (\cont'')^{-1} \right)^{-1} =
        (\cont \lor \cont') \lor \cont''
    \end{align*}
    \item Composition.
    \begin{align*}
        \cont & \parallel (\cont' \parallel \cont'') \\ &=
        (a, g) \parallel (\neg g' \lor \neg g'' \lor (a' \land a''), g' \land g'') \\ &=
        (\neg g \lor \neg g' \lor \neg g'' \lor (a \land a' \land a''), g \land (g' \land g'')) \\ &=
        (\neg g \lor \neg g' \lor \neg g'' \lor ((a \land a') \land a''), (g \land g') \land g'') \\ &=
        (\neg g \lor \neg g' \lor (a \land a') , g \land g') \parallel \cont'' =
        (\cont \parallel \cont') \parallel \cont''
    \end{align*}
    \item Merging.
    \begin{align*}
        \cont & \bullet (\cont' \bullet \cont'') =
        \left( \cont^{-1} \parallel ((\cont')^{-1} \parallel (\cont'')^{-1}) \right)^{-1} \\ &=
        \left( (\cont^{-1} \parallel (\cont')^{-1}) \parallel (\cont'')^{-1} \right)^{-1} =
        (\cont \bullet \cont') \bullet \cont'' \qedhere
    \end{align*}
\end{itemize}
\end{proof}

\begin{proof}[Proof of Proposition \ref{nkjhlbskf}]
    Due to the duality relations between conjunction and disjunction and between composition and merging,
    the reciprocal map provides monoid isomorphisms between $(\calg(\balg), \land, 1)$ and $(\calg(\balg), \lor, 0)$ and between
    $(\calg(\balg), \parallel, \id)$ and $(\calg(\balg), \bullet, \id)$.

    We now show that the map $\theta_g: \parmon(\balg) \to \andmon(\balg)$ defined as
    $$\theta_g: (a, g) \mapsto (\neg(a\land g), g)$$
    is a monoid isomorphism.

    Observe that $\theta_g^2(a, g) = \theta_g (\neg(a\land g), g) = (\neg(\neg(a\land g) \land g), g) = (a, g)$, so $\theta_g$ is an involution. We proceed to show it is a monoid homomorphism. Let $\cont = (a, g)$ and $\cont' = (a', g')$.
    \begin{itemize}
        \item $\theta_g (\id) = \theta_g(1, 1) = (0, 1) = 1$
        \item We verify whether $\theta_g$ commutes with the multiplications:
        \begin{align*}
        \theta_g&(\cont \parallel \cont') = \theta_g(\neg(g \land g') \lor (a \land a'), g\land g') \\ &=
        (\neg(a \land g \land a'\land g'), g \land g') \\ &=
        (\neg(a \land g) \lor \neg (a'\land g'), g \land g') \\ &=
        (\neg(a \land g), g) \land (\neg(a' \land g'), g') =
        \theta_g(\cont) \land \theta_g(\cont').
        \end{align*}
    \end{itemize}
    As $\theta_g$ is an involution, we have to check that it is a monoid map from $\andmon(\balg)$ to $\parmon(\balg)$.
    \begin{align*}
        \theta_g&(\cont \land \cont') \\ &= \theta_g(a \lor a', g\land g') =
        (\neg(g \land g' \land (a\lor a')), g \land g') \\ &=
        (\neg(g \land g') \lor (\neg a\land \neg a'), g \land g') \\ &=
        ( \neg (g \land g') \lor \left(\neg(a \land g) \land \neg(a' \land g')\right) , g \land g') \\ &=
        ( \neg(a \land g), g) \parallel ( \neg(a' \land g'), g') =
        \theta_g(\cont) \parallel \theta_g(\cont')
    \end{align*}

    The isomorphism $\theta_a$ is defined using the diagram \eqref{lkjvsfv}, i.e.,
$$\theta_a (a, g) = \left( \theta_g (a, g)^{-1}\right)^{-1} = \left( \theta_g (g, a) \right)^{-1} =
(\neg(a \land g), a)^{-1} = (a, \neg(a \land g)).\qedhere$$
\end{proof}

\begin{proof}[Proof of Theorem \ref{bnwjdnhx}]
    Because $\iota$ is monic, $f$ generates a unique monoid map $f^\#$:
$$
\begin{tikzcd}
    \boolandmon(\balg) \times \boolandmon(\balg) \arrow[d, "\pi", two heads] & \boolandmon(\balg') \times \boolandmon(\balg') \\
    \parmon(\balg) \arrow[r, "f"'] \arrow[ur, "f^\#"', dashed]
    & \parmon(\balg') \arrow[u, "\iota"', rightarrowtail]
\end{tikzcd}
$$
Because $\pi$ is epic, $f^\#$ generates a unique monoid map $f^\flat$
$$
\begin{tikzcd}
    \boolandmon(\balg) \times \boolandmon(\balg) \arrow[d, "\pi", two heads] 
    \arrow[r, "f^\flat", dashed]
    & \boolandmon(\balg') \times \boolandmon(\balg') \\
    \parmon(\balg) \arrow[r, "f"'] \arrow[ur, "f^\#"', dashed]
    & \parmon(\balg') \arrow[u, "\iota"', rightarrowtail]
\end{tikzcd}
$$
Thus, we have the diagram
\begin{equation}\label{ifbwkrfskdbg}
    \begin{tikzcd}
        \boolandmon(\balg) \times \boolandmon(\balg)
        \arrow[d, "\pi", two heads, shift left=0.7ex]
        \arrow[r, "f^\flat", dashed]
        & \boolandmon(\balg') \times \boolandmon(\balg') 
        \arrow[d, "\pi"', two heads, shift right=0.7ex]    
        \\
        \parmon(\balg) \arrow[r, "f"'] \arrow[ur, "f^\#"', dashed]
        \arrow[u, "\iota", rightarrowtail, shift left=0.7ex]
        & \parmon(\balg') \arrow[u, "\iota"', rightarrowtail, shift right=0.7ex]
    \end{tikzcd}
\end{equation}

$f^\flat$ can be factored as the product of two maps $$\boolandmon(\balg) \times \boolandmon(\balg) \to \boolandmon(\balg').$$
We also observe that
$f^\flat(a, g) = f^\flat ((a, 1) \land (1, g)) = f^\flat (a, 1) \land f^\flat(1, g)$.
This means there are monoid maps $l_a, l_g, r_a, r_g\colon \boolandmon(\balg) \to \boolandmon(\balg')$ such that
\[
f^\flat (a, g) = \left( l_a(a) l_g(g) , r_a(a) r_g(g) \right).
\]
To obtain further restrictions on these maps, we use~\eqref{ifbwkrfskdbg}:
$
    f^\flat (a, g) = \iota \circ f \circ \pi (a, g) = 
    \iota \circ \pi \circ f^\flat \circ \iota \circ \pi (a, g) 
    = \left( l_a(ag) l_g(g) r_a(ag) r_g(g) , r_a(ag) r_g(g) \right)
$.
\end{proof}

\begin{proof}[Proof of Proposition \ref{qocvksf}]
Let $\cont = (a, g)$, $\cont' = (a', g')$, and $\cont'' = (a'', g'')$ be contracts.
\begin{itemize}[leftmargin=*]
    \item Conjunction.
    \begin{align*}
        %%%%%%%%%%%%%%%%%%%%%%%%%%%%%%%%%%
        %% disjunction
        \cont & \land (\cont' \lor \cont'') =
        (a, g) \land (a' \land a'', g' \lor g'') \\ &=
        \left((a \lor a') \land (a \lor a''), (g\land g') \lor (g \land g'')\right) \\ &=
        (\cont \land \cont') \lor (\cont \land \cont'')\\
        %%%%%%%%%%%%%%%%%%%%%%%%%%%%%%%%%%
        %% composition
        \cont & \land (\cont' \parallel \cont'') \\ &=
        (a, g) \land \left ((g' \land g'') \to (a' \land a''), g' \land g'' \right) \\ &=
        \left (a \lor (a' \land a'') \lor \neg g' \lor \neg g'', g \land g' \land g'' \right) 
        \\ &=
        \left (a \lor \neg g \lor (a' \land a'') \lor \neg g' \lor \neg g'', g \land g' \land g'' \right)
        \\ &=
        \left ((g \land g' \land g'') \to ((a \lor a') \land (a \lor a'')), g \land g' \land g'' \right) \\ &=
        (a \lor a', g \land g') \parallel (a \lor a'', g \land g'') \\ &=
        (\cont \land \cont') \parallel (\cont \land \cont'')\\
        %%%%%%%%%%%%%%%%%%%%%%%%%%%%%%%%%
        %%% merging
        \id & \land (1 \bullet 0) =
        \id \land 1 =
        \id \ne
        0 \\ &= \id \bullet 0 = (\id \land 1) \bullet (\id \land 0)
    \end{align*}
    \item Disjunction.
    \begin{align*}
        %%%%%%%%%%%%%%%%%%%%%%%%%%%%%%%%%%
        %% conjunction
        \cont &\lor (\cont' \land \cont'') =
        \left( \cont^{-1} \land ((\cont')^{-1} \lor (\cont'')^{-1}) \right)^{-1} \\ &=
        \left( (\cont^{-1} \land (\cont')^{-1}) \lor ( \cont^{-1} \land (\cont'')^{-1}) \right)^{-1} \\ &=
        (\cont \lor \cont') \land (\cont \lor \cont'') \\
        %%%%%%%%%%%%%%%%%%%%%%%%%%%%%%%%%%
        %% merging
        \cont &\lor (\cont' \bullet \cont'') =
        \left( \cont^{-1} \land ((\cont')^{-1} \parallel (\cont'')^{-1}) \right)^{-1} \\ &=
        \left( (\cont^{-1} \land (\cont')^{-1}) \parallel ( \cont^{-1} \land (\cont'')^{-1}) \right)^{-1} \\ &=
        (\cont \lor \cont') \bullet (\cont \lor \cont'') \\
        %%%%%%%%%%%%%%%%%%%%%%%%%%%%%%%%%
        %%% composition
        \id &\lor (1 \parallel 0) =
        \id \lor 0 =
        \id \ne
        1 \\ &= 1 \parallel \id = (\id \lor 1) \parallel (\id \lor 0)
    \end{align*}
    \item Composition.
    \begin{align*}
        %%%%%%%%%%%%%%%%%%%%%%%%%%%%%%%%%%
        %% conjunction
        \cont \parallel & (\cont' \land \cont'') =
        (a, g) \parallel (a' \lor a'', g' \land g'') \\ =&
        \left(\begin{aligned} &\neg (g\land g') \lor \neg (g \land g'') \lor (a \land a') \lor (a \land a''), \\ &(g\land g') \land (g \land g'')\end{aligned}\right) \\ =&
        \left((g\land g') \to (a \land a') , (g\land g') \right) \land \\ &
        \left((g \land g'') \to (a \land a''), (g \land g'')\right) \\
        =&
        (\cont \parallel \cont') \land (\cont \parallel \cont'')\\
        %%%%%%%%%%%%%%%%%%%%%%%%%%%%%%%%%%
        %% disjunction
        \cont  \parallel & (\cont' \lor \cont'') =
        (a, g) \parallel (a' \land a'', g' \lor g'') \\ =&
        \left( \begin{aligned} & \neg (g \land g') \land \neg (g \land g'') 
        \lor ((a \land a') \land (a \land a''))
        , \\ & (g \land g') \lor (g \land g'') \end{aligned} \right) \\ =&
        \left( \begin{aligned} & \left( \neg (g \land g') \lor (a \land a') \right) \land
        \left( \neg (g \land g'') \lor (a \land a'') \right)
        , \\ & (g \land g') \lor (g \land g'') \end{aligned} \right) \\
        =&
        \left((g\land g') \to (a \land a') , (g\land g') \right) \\ & \lor
        \left((g \land g'') \to (a \land a''), (g \land g'')\right) \\
        =&
        (\cont \parallel \cont') \lor (\cont \parallel \cont'')\\
        %%%%%%%%%%%%%%%%%%%%%%%%%%%%%%%%%
        %%% merging
        1 \parallel & (0 \bullet \id) =
        1 \parallel 0 = 0 \ne 1 =
        0 \bullet 1 \\ =& (1 \parallel 0) \bullet (1 \parallel \id)
    \end{align*}
    The distributivity of composition over conjunction was shown in~\cite{10.1007/978-3-030-89247-0_3}.
    \item Merging.
    \begin{align*}
        %%%%%%%%%%%%%%%%%%%%%%%%%%%%%%%%%%
        %% conjunction
        \cont &\bullet (\cont' \land \cont'') =
        \left( \cont^{-1} \parallel ((\cont')^{-1} \lor (\cont'')^{-1}) \right)^{-1} \\ &=
        \left( (\cont^{-1} \parallel (\cont')^{-1}) \lor ( \cont^{-1} \parallel (\cont'')^{-1}) \right)^{-1} \\ &=
        (\cont \bullet \cont') \land (\cont \bullet \cont'') \\
        %%%%%%%%%%%%%%%%%%%%%%%%%%%%%%%%%%
        %% disjunction
        \cont &\bullet (\cont' \lor \cont'') =
        \left( \cont^{-1} \parallel ((\cont')^{-1} \land (\cont'')^{-1}) \right)^{-1} \\ &=
        \left( (\cont^{-1} \parallel (\cont')^{-1}) \land ( \cont^{-1} \parallel (\cont'')^{-1}) \right)^{-1} \\ &=
        (\cont \bullet \cont') \lor (\cont \bullet \cont'') \\
        %%%%%%%%%%%%%%%%%%%%%%%%%%%%%%%%%
        %%% composition
        0 &\bullet (1 \parallel \id) =
        0 \bullet 1 =
        1 \ne
        0 = 1 \parallel 0 \\ &= (0 \bullet 1) \parallel (0 \bullet \id) \qedhere
    \end{align*}
\end{itemize}
\end{proof}

\begin{proof}[Proof of Proposition \ref{nkkbhgfb}]
    Due to the distributivity of $\star$ over conjunction and disjunction (Proposition~\ref{qocvksf}), we have
    $
    (C_1 \land C_2) \star (C_3 \land C_4) = 
    (C_1 \star C_3) \land (C_1 \star C_4) \land 
    (C_2 \star C_3) \land (C_2 \star C_4) \le
    (C_1 \star C_3) \land (C_2 \star C_4)
    $ and
    $
    (C_1 \lor C_2) \star (C_3 \lor C_4) = 
    (C_1 \star C_3) \lor (C_1 \star C_4) \lor 
    (C_2 \star C_3) \lor (C_2 \star C_4) \ge
    (C_1 \star C_3) \lor (C_2 \star C_4)
    $.
\end{proof}

\begin{proof}[Proof of Proposition \ref{vfglgqbwf}]
    Tables~\ref{{\documentPath}nyfclfr} and~\ref{{\documentPath}kcsdy} tell, respectively, how operations behave with respect to the distinguished elements and how operations distribute.

    Suppose conjunction is the multiplication operation. Since $\cont \land \id \ne \id$, neither merging nor composition can be the addition operations. On the other hand, $\cont \land 0 = 0$, and conjunction distributes over disjunction. Thus, $(\calg(\balg), \land, \lor, 1, 0)$ is a semiring.

    Now we assume disjunction is the multiplication operation. Since $\cont \lor \id \ne \id$, neither merging nor composition can be the addition operations. However, $\cont \lor 1 = 1$, and disjunction distributes over conjunction. Thus, $(\calg(\balg), \lor, \land, 0, 1)$ is a semiring.

    Suppose composition is the multiplication operation. Since composition does not distribute over merging, merging cannot be addition. Since $\cont \parallel 1 \ne 1$, conjunction cannot be addition. However, $\cont \parallel 0 = 0$ and composition distributes over disjunction. Thus, $(\calg(\balg), \parallel, \lor, \id, 0)$ is a semiring.

    Now suppose that merging is the multiplication. Since merging does not distribute over composition, composition cannot be addition. Also, since $\cont \bullet 0 \ne 0$, conjunction cannot be addition. However, $\cont \bullet 1 = 1$ and merging distributes over conjunction. Thus, $(\calg(\balg), \bullet, \land, \id, 1)$ is a semiring.
\end{proof}

\begin{proof}[Proof of Proposition \ref{nhwlfchcjbg}]
    The two semiring isomorphisms are given by the reciprocal map.
    Suppose there is a semiring map $\beta\colon \parsr(\balg) \to \andsr(\balg)$.
    Then
    \begin{align*}\beta(a, 1_B) = \beta((a, 1_B) \lor \id) = \beta(a, g) \lor 1 = 1,\end{align*}
    which means that $\beta$ is not invertible.
\end{proof}

\begin{proof}[Proof of Proposition \ref{khjbcgfkbgfk}]
    Let $b, b' \in \balg$.
    %Consider a semiring map from $\boolandsr(\balg)$ to $\andsr(\balg)$. This map must fulfill $0_B \mapsto (1_B, 0_B)$ and $1_B \to (0_B, 1_B)$. Therefore, it must have the form $b \mapsto (\neg g, g) = \Delta_g(g)$. Similarly, 
    \begin{itemize}
        \item $\Delta_g (0_B) = (1_B, 0_B)  = 0$ and $\Delta_g (1_B) = (0_B, 1_B)  = 1$
        \item $\Delta_g (b \land b') = (\neg(b \land b'), b \land b') = (\neg b \lor \neg b', b \land b') = \Delta_g(b) \land \Delta_g(b')$ 
        \item $\Delta_g (b \lor b') = (\neg(b \lor b'), b \lor b') = (\neg b \land \neg b', b \lor b') = \Delta_g(b) \lor \Delta_g(b')$
    \end{itemize}
    This shows that $\Delta_g$ is a semiring homomorphism. Now we study $\iota_g$:
    \begin{itemize}
        \item $\iota_g (0_B) = (1_B, 0_B)  = 0$ and $\iota_g (1_B) = (1_B, 1_B)  = \id$
        \item $\iota_g (b \land b') = (1_B, b \land b') = (1_B, b) \parallel (1_B, b') = \iota_g(b) \parallel \iota_g(b')$ 
        \item $\iota_g (b \lor b') = (1, b \lor b') = (1_B, b) \lor (1_B, b')= \iota_g(b) \lor \iota_g(b')$
    \end{itemize}
    We conclude that $\iota_g$ is a semiring homomorphism as well.
\end{proof}

\begin{proof}[Proof of Proposition \ref{knxfghsqblr}]
Let $b, b' \in \balg$, $\cont = (a, g)$, and $\cont' = (a', g')$. We have the following properties:
\begin{itemize}[leftmargin=*]
    \item Order.
    \begin{align*}
        b \cdot \cont &= b \cdot (a, g) = (b\land a, b \to g) \ge (a, g) = \cont \\
        \cont \cdot b &= (a, g) \cdot  b = (b\to a, b \land g) \le (a, g) = \cont
    \end{align*}
    Now suppose $\cont = (a, g) \le \cont' = (a', g')$. We have $a' \le a$ and $g \le g'$.
    Since the operations $b \land (\cdot)$ and $b \rightarrow (\cdot)$ are monotonic, we have
    $b \cdot \cont \le b \cdot \cont'$ and $\cont \cdot b \le \cont' \cdot b$.
    \item Reciprocal.
    \begin{align*}
        (b \cdot \cont)^{-1} &= (b \land a, b \to g)^{-1} = (b \to g, b \land a) \\ &= 
        (g, a) \cdot b = \cont^{-1} \cdot b
    \end{align*}
    \item Associativity.
    \begin{align*}
        (b & \land b') \cdot \cont =
        \left( (b \land b') \land a, (b \land b') \to g \right) \\ & =
        \left( b \land (b' \land a), b \to (b' \to g) \right) \\ &=
        b \cdot \left(b' \cdot (a,g) \right) =
        b \cdot \left(b' \cdot \cont \right)\\
        %%%%%%%%%%%%%
        \cont & \cdot (b \land b') =
        \left((b \land b') \cdot \cont^{-1}\right)^{-1} =
        \left((b' \land b) \cdot \cont^{-1}\right)^{-1} \\ &=
        \left(b' \cdot \left(b \cdot \cont^{-1} \right)\right)^{-1} =
        \left(b \cdot \cont^{-1} \right)^{-1} \cdot b' =
        \left(\cont \cdot b  \right) \cdot b'
    \end{align*}
    \item Distributivity over the Boolean algebra $\balg$.
    \begin{align*}
        (b \lor b') \cdot \cont &=
        \left( (b \lor b') \land a, (b \lor b') \to g \right) \\ &=
        \left( (b \land a) \lor (b' \land a), (b \to g) \land (b' \to g) \right)  \\ &=
        (b\cdot \cont) \land (b' \cdot \cont) \\
        %%%%%%%%%%%%
        \cont \cdot (b \lor b') &=
        \left( (b \lor b') \cdot \cont^{-1} \right)^{-1} \\ &=
        \left( b \cdot \cont^{-1} \land b' \cdot \cont^{-1} \right)^{-1} \\ &=
        \cont \cdot b \lor \cont \cdot b'
    \end{align*}
    %%%%%%%%%%%%%%%%%%%%%%%%%%%%%%%%%%%%%%%%%%%%
    \item Distributivity over the contract operations.
    \begin{itemize}
        \item Conjunction.
            \begin{align*}
            b \cdot (\cont \land \cont') &=
            (b \land (a \lor a'), b\to (g \land g')) \\ &=
            ((b \land a) \lor (b \land a'), (b\to g) \land (b \to g')) \\ &=
            b\cdot \cont \land b\cdot \cont' \\
            %%%%%%%%%%%%%%%%%%%%%%%%%%%%%%
            (\cont \land \cont') \cdot b &=
            (b \to (a \lor a'), b \land (g \land g')) \\ &=
            ((b \to a) \lor a', (b \land g) \land g') \\ &=
            (\cont \cdot b) \land \cont'
            \end{align*}
        \item Disjunction.
            \begin{align*}
            b \cdot (\cont \lor \cont') &=
            \left( \left( \cont^{-1} \land (\cont')^{-1} \right) \cdot b\right)^{-1} \\ &=
            \left( \cont^{-1} \cdot b \land (\cont')^{-1}  \right)^{-1} \\ &=
            b \cdot \cont \lor \cont' \\
            %%%%%%%%
            (\cont \lor \cont') \cdot b &=
            \left( b \cdot \left( \cont^{-1} \land (\cont')^{-1} \right) \right)^{-1} \\ &=
            \left( b \cdot \cont^{-1} \land b \cdot (\cont')^{-1}  \right)^{-1} \\ &=
            \cont \cdot b \lor \cont' \cdot b
            \end{align*}
        \item Composition.
        \begin{align*}
            b &\cdot (\cont \parallel \cont') \\ &=
            (b \land \left( (g \land g') \to (a \land a') \right), b\to (g \land g')) \\ &=
            \left(\begin{aligned} & \left( b \to (g \land g') \right) \to (b \land a \land a') , \\ &b\to (g \land g') \end{aligned}\right) \\ &=
            \left(\begin{aligned} & \left( (b \to g) \land (b \to g') \right) \to ((b \land a) \land (b \land a')) , \\ & (b\to g) \land (b \to g') \end{aligned} \right) \\ &=
            b \cdot \cont \parallel b\cdot \cont'
            \\
            %%%%%%%%%%%%%%%%%%%%%%%%%%%%%%
            (\cont & \parallel \cont') \cdot b \\ &=
            \left( b \to \left( (g\land g') \to (a \land a') \right), b \land g \land g' \right) \\ &=
            \left( (b \land g\land g') \to \left( a \land a' \right), b \land g \land g' \right) \\ &=
            \left( (b \land g\land g') \to \left( (b \to a) \land a' \right), b \land g \land g' \right) \\ &=
            (\cont \cdot b) \parallel \cont'
            \end{align*}
        \item Merging.
            \begin{align*}
                b & \cdot (\cont \bullet \cont') =
                \left( (\cont^{-1} \parallel (\cont')^{-1}) \cdot b \right)^{-1} \\ &=
                \left( \cont^{-1} \cdot b \parallel (\cont')^{-1} \right)^{-1} =
                b \cdot \cont \bullet \cont' \\
                %%%%%%%%%%%%
                (\cont & \bullet \cont') \cdot b=
                \left( b \cdot (\cont^{-1} \parallel (\cont')^{-1}) \right)^{-1} \\ &=
                \left( b \cdot \cont^{-1}  \parallel b \cdot (\cont')^{-1} \right)^{-1} =
                \cont \cdot b \bullet \cont' \cdot b
            \end{align*}
    \end{itemize}
    %%%%%%%%%%%%%%%%%%%%%%%%%%%%%%%%%%%%%%%%%%%
    \item Distributivity over the adjoint operations.
    \begin{itemize}
        \item Quotient.
            \begin{align*}
            b \cdot (\cont / \cont') &= b \cdot (\cont \bullet (\cont')^{-1}) =
            b \cdot \cont \bullet (\cont' )^{-1} \\ &=
            (b \cdot \cont) / (\cont' ) \\
            %%%%%%%%%%%%%%%%%%
            b \cdot (\cont / \cont') &= b \cdot (\cont \bullet (\cont')^{-1}) =
            \cont \bullet b \cdot (\cont')^{-1} \\ & =
            \cont \bullet (\cont' \cdot b)^{-1} =
            \cont / (\cont' \cdot b) \\
            %%%%%%%%%%%%%%%%%%
            (\cont / \cont') \cdot b &=
            (\cont \bullet (\cont')^{-1}) \cdot b =
            \cont \cdot b \bullet (\cont')^{-1} \cdot b \\ &=
            \cont \cdot b \bullet (b \cdot \cont')^{-1} =
            (\cont \cdot b) / (b \cdot \cont')
            \end{align*}
        \item Separation.
        \begin{align*}
            b \cdot (\cont \div \cont') &= b \cdot (\cont \parallel (\cont')^{-1}) =
            b \cdot \cont \parallel b \cdot (\cont')^{-1} \\ &=
            b \cdot \cont \parallel (\cont' \cdot b)^{-1} =
            (b \cdot \cont) \div (\cont' \cdot b) \\
            %%%%%%%%%%%%%%%%%%
            (\cont \div \cont') \cdot b &=
            (\cont \parallel (\cont')^{-1}) \cdot b  =
            \cont \cdot b \parallel (\cont')^{-1}  \\ &=
            (\cont \cdot b) \div \cont' \\
            (\cont \div \cont') \cdot b &=
            (\cont \parallel (\cont')^{-1}) \cdot b
            \\ &=
            \cont \parallel (\cont')^{-1} \cdot b =
            \cont  \div (b \cdot \cont')
            \end{align*}
        \item Implication.
        \begin{align*}
            b & \cdot (\cont' \to \cont) =
            b \cdot \left( (a\land \neg a') \lor (g' \land \neg g), g \lor \neg g'\right) \\ &=
            \left( b \land (a\land \neg a') \lor (b \land g' \land \neg g), (b \to g) \lor \neg g'\right) \\ &=
            \left( \begin{aligned} & ((b \land a)\land \neg a') \lor (g' \land \neg (b \to g)), \\ & (b \to g) \lor \neg g' \end{aligned}\right) \\ &=
            \cont' \to b \cdot \cont \\
            %%%%%%%%%%%%%%%%%%%%
            b &\cdot (\cont' \to \cont) =
            b \cdot \left( (a\land \neg a') \lor (g' \land \neg g), g \lor \neg g'\right) \\ &=
            \left(\begin{aligned} & (a\land \neg (b \to a')) \lor (b \land g' \land \neg g), \\ & (b\land g') \to g \end{aligned}\right) \\ &=
            \cont' \cdot b \to \cont \\
            %%%%%%%%%%%%%%%%%%%%
            (\cont' & \to \cont) \cdot b=
            \left( (a\land \neg a') \lor (g' \land \neg g), g \lor \neg g'\right) \cdot b \\ &=
            \left( \begin{aligned} & (b \to (a\land \neg a')) \lor (b \to (g' \land \neg g)), \\ & (b \land g) \lor \neg (b \to g') \end{aligned} \right) \\ &=
            \left( \begin{aligned} & (b \to a) \land \neg (b \land a') \lor (b \to g') \land \neg (b \land g), \\ & (b \land g) \lor \neg (b \to g') \end{aligned} \right) \\ &=
            b\cdot \cont' \to \cont \cdot b
        \end{align*}
        \item Coimplication.
        \begin{align*}
            b & \cdot (\cont' \nrightarrow \cont) = 
            \left( ((\cont')^{-1} \to \cont^{-1}) \cdot b\right)^{-1} \\ &=
            \left( b \cdot (\cont')^{-1} \to \cont^{-1} \cdot b \right)^{-1} =
            \cont' \cdot b \nrightarrow b \cdot \cont \\
            %%%%%%%%%%%%%%%%%
            (\cont' & \nrightarrow \cont) \cdot b= 
            \left( b \cdot ((\cont')^{-1} \to \cont^{-1}) \right)^{-1} \\ &=
            \left( (\cont')^{-1} \to b \cdot \cont^{-1} \right)^{-1} =
            \cont'  \nrightarrow \cont \cdot b \\ &=
            \left( (\cont')^{-1} \cdot b \to \cont^{-1} \right)^{-1} =
            b \cdot \cont' \nrightarrow \cont \qedhere
        \end{align*}
    \end{itemize}
\end{itemize}
\end{proof}

\end{document}